\newtheoremstyle{custom}% name
  {3pt}%      Space above
  {3pt}%      Space below
  {\slshape}%         Body font
  {}%         Indent amount (empty = no indent, \parindent = para indent)
  {\bfseries}% Thm head font
  {.}%        Punctuation after thm head
  { }%     Space after thm head: " " = normal interword space;
   {}%         Thm head spec (can be left empty, meaning `normal')
\theoremstyle{custom}
\newtheorem{theorem}{Theorem}[section]
\newtheorem{proposition}[theorem]{Proposition}
\newtheorem{proposition/definition}[theorem]{Proposition/Definition}
\newtheorem{lemma}[theorem]{Lemma}
\newtheorem{conjecture}[theorem]{Conjecture}
\theoremstyle{definition}
\newtheorem{definition}[theorem]{Definition}
\theoremstyle{remark}
\newtheorem{remark}[theorem]{Remark}
\newtheoremstyle{exercise}% name
  {3pt}%      Space above
  {6pt}%      Space below
  {}%         Body font
  {}%         Indent amount (empty = no indent, \parindent = para indent)
  {\bfseries}% Thm head font
  {:}%        Punctuation after thm head
  { }%     Space after thm head: " " = normal interword space;
   {}%         Thm head spec (can be left empty, meaning `normal')
\theoremstyle{exercise}
\newtheorem{exercise}[theorem]{Exercise}
\newtheoremstyle{exercises}% name
  {3pt}%      Space above
  {6pt}%      Space below
  {}%         Body font
  {}%         Indent amount (empty = no indent, \parindent = para indent)
  {\bfseries}% Thm head font
  {:}%        Punctuation after thm head
  {\newline}%     Space after thm head: " " = normal interword space;
   {}%         Thm head spec (can be left empty, meaning `normal')
\theoremstyle{exercise}
\newtheorem{exercises}[theorem]{Exercises}
\def\boxit#1{\vbox{\hrule height1pt\hbox{\vrule width1pt\kern3pt
  \vbox{\kern3pt#1\kern3pt}\kern3pt\vrule width1pt}\hrule height1pt}}
\def\trank{\text{rank}}
\def\BC{\mathbb C}\def\BN{\mathbb N}
\def\BR{\mathbb R}
\def\hd{,...,}
\def\11{\mathbf 1}
\def\l{\lambda}
\def\a{\alpha}
\def\o{\omega}
\def\s{\sigma}
\def\op{{\mathord{\,\oplus }\,}}
\def\otc{{\mathord{\otimes\cdots\otimes}\;}}
\def\ra{{\mathord{\;\rightarrow\;}}}
\def\tr{{\rm trace}\;}
\def\op{\oplus}
\def\BZ{\Bbb Z}
\def\op{\oplus}
\def\s{\sigma}
\def\a{\alpha}
\def\l{\lambda}
\def\FS{\mathfrak  S}
\def\BC{\mathbb  C}
\def\tcodim{\text{codim}}
\def\BR{\mathbb  R}
\def\opc{\op\cdots\op}
\def\hd{, \hdots ,}
\def\ra{\rightarrow}
\def\tdet{\operatorname{det}}
\def\tperm{\operatorname{perm}}
\def\ttrace{\operatorname{trace}}
\def\trank{\operatorname{rank}}
\def\be{\begin{equation}}
\def\ene{\end{equation}}
\def\p{{\bold P}}
\def\np{{\bold N\bold P}}
\def\G{\Gamma}
\newcommand{\dc}{\operatorname{dc}}\newcommand{\edc}{\operatorname{edc}}
\newcommand{\rdc}{\operatorname{rdc}}
\newcommand{\immc}{\operatorname{immc}}\newcommand{\himmc}{\operatorname{himmc}}\newcommand{\abpc}{\operatorname{abpc}}
\newcommand{\labpc}{\operatorname{labpc}}
\newcommand{\dlabpc}{\operatorname{dlabpc}}\newcommand{\hmpc}{\operatorname{hmpc}}
 \newcommand{\transp}{\operatorname{transp}}
\def\trank{{\mathrm {rank}}}
\newcommand{\GL}{\operatorname{GL}}
\newcommand{\Mat}{\operatorname{Mat}}
\def\twt{{\rm weight}}
\def\corank{{\mathrm {corank}}}
\newtheorem*{lemma*}{Lemma}
\renewcommand{\det}{\textup{det}}
\newcommand{\per}{\textup{perm}}
\title{ On the complexity of the permanent in various computational models}
\author{Christian Ikenmeyer\thanks{Max Planck Institute for Informatics, Saarland Informatics Campus, Germany}~ and J.M. Landsberg\thanks{Texas A\&M University, Landsberg partially supported by NSF grant DMS-1405348.}}
\begin{document}
\sloppy

\maketitle

\begin{abstract}
 We  answer 
a question in \cite{LRpermdet}, showing the regular determinantal complexity of
the determinant $\tdet_m$  is $O(m^3)$. We answer questions in,  and generalize results of \cite{DBLP:journals/corr/AravindJ15},
showing there is no rank one determinantal expression for $\tperm_m$ or $\tdet_m$ when $m\geq 3$.
Finally we state and prove several  \lq\lq folklore\rq\rq\ results relating different models of computation.
 \end{abstract}

\section{Introduction}

Let $P(y^1\hd y^M)\in S^m\BC^M$ be a homogeneous polynomial of degree $m$ in $M$ variables.
A {\it size $n$ determinantal expression} for $P$ is an expression:
\be\label{detexpr}
P=\tdet_{n}(\Lambda+ \sum_{j=1}^M X^{j}y^j).
\ene
where $X^j,\Lambda$ are $n\times n$ complex matrices.

The {\it determinantal complexity} of $P$, denoted $\dc(P)$, is the smallest $n$ for which
a size $n$ determinantal  expression exists for $P$. Valiant \cite{vali:79-3} proved that for any polynomial $P$, $\dc(P)$ is finite.
 Let $(y^{i,j})$, $1\leq i,j\leq m$, be linear coordinates on the space of $m\times m$ matrices. Let $\tperm_m:=
\sum_{\s\in \FS_m}y^{1,\s(1)}\cdots y^{m,\s(m)}$ where $\FS_m$ is the permutation group on $m$ letters. 

 Valiant's famous algebraic analog of the $\p\neq\np$ conjecture \cite{vali:79-3} is:

\begin{conjecture} [Valiant \cite{vali:79-3}]\label{valperdet} 
The sequence $\dc(\tperm_m)$ grows super-polynomially fast.
\end{conjecture}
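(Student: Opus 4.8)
\medskip
\noindent\textbf{Proof proposal.} The plan is to establish a super-polynomial lower bound on $\dc(\tperm_m)$ by separating a padded permanent from the orbit closure of the determinant, in the spirit of Geometric Complexity Theory. The first step is the only routine one: a size $n$ determinantal expression \eqref{detexpr} for $\tperm_m$, after homogenizing the affine term $\Lambda$ against an auxiliary linear form $\ell$, shows that $\ell^{n-m}\tperm_m$ lies in $\overline{\GL_{n^2}\cdot\tdet_n}\subseteq \Sym^n\BC^{n^2}$. Hence Conjecture~\ref{valperdet} would follow from showing that for every polynomial $q$ and all large $m$ one has $\ell^{q(m)-m}\tperm_m\notin\overline{\GL_{q(m)^2}\cdot\tdet_{q(m)}}$.

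To prove such non-containment one passes to coordinate rings. Writing $N=n^2$, the ideal $I(\overline{\GL_N\cdot\tdet_n})$ is a $\GL_N$-submodule of $\Sym^\bullet(\Sym^n\BC^N)^*$, and any irreducible $S_\lambda\BC^N$ occurring in degree $d$ of this ideal with strictly larger multiplicity than in the corresponding ideal of the padded-permanent orbit closure is an obstruction certifying the separation. The key steps would then be: (i) compute, or usefully bound, the plethysm coefficients controlling the decomposition of $\Sym^d(\Sym^n\BC^N)^*$; (ii) using the stabilizer $(\GL_n\times\GL_n)\rtimes\ZZ/2$ of $\tdet_n$ together with Schur--Weyl duality, determine which $S_\lambda\BC^N$ survive in the determinant coordinate ring; (iii) carry out the analogous, and far less constrained, analysis for $\ell^{q(m)-m}\tperm_m$, whose symmetry group is comparatively small; and (iv) exhibit an explicit family $\lambda=\lambda(m)$ witnessing a multiplicity gap whenever $n$ is only polynomially large in $m$.

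The hard part is everything after step (i), and step (iv) is exactly where the program currently stands stuck, which is why this is the central open problem of algebraic complexity. By the Bürgisser--Ikenmeyer--Panova no-go theorem, \emph{occurrence} obstructions --- partitions that appear on the permanent side but vanish on the determinant side --- provably cannot exist in the required range, so one is forced to compare multiplicities of plethysm/Kronecker type for which no sufficiently sharp asymptotics are presently available; this is the essential difficulty. The obvious alternative --- pushing the Mignon--Ressayre argument, which bounds $\dc$ below by half the rank of a generic Hessian and yields only $\Omega(m^2)$, up to a super-polynomial bound by invoking higher-order local differential invariants of the hypersurface $\{\tdet_n=0\}$ --- meets the equally stubborn obstacle that every currently understood such invariant is too coarse to detect more than a quadratic amount of the permanent's structure. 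It is precisely because these obstacles remain unresolved that the present paper restricts attention to the regular determinantal and rank one models, where unconditional statements such as the $O(m^3)$ bound of the abstract are actually attainable.
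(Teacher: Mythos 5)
The statement you were asked to prove is Valiant's conjecture itself. The paper does not prove it --- it states it as Conjecture \ref{valperdet} precisely because it is a central open problem --- and your proposal does not prove it either. The only step of your argument that is actually established is the first one: a size-$n$ determinantal expression for $\tperm_m$ does yield $\ell^{n-m}\tperm_m\in\overline{\GL_{n^2}\cdot[\tdet_n]}$, so a separation of orbit closures for all polynomially bounded $n$ would imply the conjecture. Everything after that is a description of the Geometric Complexity Theory program, not an execution of it, and you concede as much: step (iv), the exhibition of a family of modules witnessing a multiplicity gap, is exactly what nobody can do. Worse, the one concrete version of this strategy that had been proposed --- occurrence obstructions --- is ruled out by the very results you cite (and which the paper cites as \cite{2015arXiv151203798I, DBLP:journals/corr/BurgisserIP16}), so the program must run through multiplicity comparisons for plethysm and Kronecker coefficients for which no adequate asymptotics exist. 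A proof cannot consist of naming the obstacles to producing one.

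Concretely, the gap is this: you never produce, for any super-polynomial separation, a single degree $d$, partition $\lambda$, and pair of multiplicities that differ between $\BC[\overline{\GL_{n^2}\cdot[\tdet_n]}]$ and $\BC[\overline{\GL_{n^2}\cdot[\ell^{n-m}\tperm_m]}]$ when $n$ is polynomial in $m$; nor do you offer any mechanism that would force such a discrepancy to exist. The best unconditional lower bound remains the Mignon--Ressayre $\dc(\tperm_m)\geq m^2/2$ quoted in the introduction, and your closing paragraph correctly explains why the known techniques (Hessian rank, local differential invariants, representation-theoretic occurrence counts) all top out far below super-polynomial. The honest conclusion is that this statement has no proof in the paper or in the literature, and your text should be read as a (reasonable) survey of one attack on it rather than as a proof attempt with a repairable flaw.
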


The state of the art regarding determinantal expressions for
$\tperm_m$ is   $2^m-1\geq \dc(\tperm_m)\geq \frac{m^2}2$, respectively \cite{Gre11,MR2126826}.  

In the same paper \cite{vali:79-3}, Valiant also made the potentially stronger conjecture that there is
no polynomial sized arithmetic circuit computing $\tperm_m$.

\medskip

There are two approaches towards 
conjectures such as Conjecture \ref{valperdet}.
One is to  first  prove them in {\it restricted models}, i.e., assuming
extra hypotheses, 
with the goal of proving a conjecture by first proving it under weaker and weaker 
supplementary hypotheses until one arrives at the 
original conjecture. The second is to fix a complexity measure
such as $\dc(\tperm_m)$ and then to prove lower bounds
on the complexity measure, which we will call  {\it benchmarks}, and then 
improve the benchmarks. If one takes the first approach, it is important
to be able to compare various restrictions. If one takes the second, and would
like the flexibility of working in different (polynomially) equivalent models, one
needs precise (not just polynomial) relations between the complexity measures.
{\it The primary purpose of this paper is to address these two issues. 

\medskip

We begin with comparing restrictions:}

The first super-polynomial  lower bound for the permanent in any
non-trivial restricted model of computation was proved by Nisan  in \cite{Nisan:1991:LBN:103418.103462}: non-commutative formulas.

 To our knowledge, the first exponential lower bound
for the permanent that does  not also hold for 
the determinant  in any restricted model was $\binom{2m}m -1$ in \cite{LRpermdet}. This model was
{\it equivariant} determinantal expressions (see \cite{LRpermdet} for the definition). Let $\edc(P)$ denote the equivariant determinantal
complexity of $P$. While $\edc(\tdet_m)=m$, and for a generic polynomial $P$, $\edc(P)=\dc(P)$,
in \cite{LRpermdet} it was shown that  $\edc(\tperm_m)=\binom{2m}m -1$.
This paper is a follow-up to \cite{LRpermdet}. While equivariance is natural
for geometry, it is not a typical restriction
in computer science. 

 The   restricted models in this paper have already appeared  in the computer science literature: 
Raz's multi-linear circuits \cite{MR2535881},
Nisan's	non-commutative formulas \cite{Nisan:1991:LBN:103418.103462} and the \lq\lq rank-$k$\rq\rq\ determinantal expressions of
Aravind and Joglekar \cite{DBLP:journals/corr/AravindJ15}.

 Our results regarding different restricted models are:  
\begin{itemize}

\item We answer a question in \cite{LRpermdet} regarding the regular determinantal complexity of the determinant, Proposition \ref{rdcdetm}. 

\item We prove $\tperm_m$ does not admit a rank one determinantal expression for $m\geq 3$, Theorem \ref{rankone}, answering
a question posed in 
\cite{DBLP:journals/corr/AravindJ15}.
\end{itemize}

Regarding benchmarks, we make precise comparisons between different complexity measures, Theorem \ref{cmeasures}.  Most of these relations
were \lq\lq known to the experts\rq\rq\ in terms of the measures being
polynomially related, but for the purposes of comparisons we
need the more precise results presented here.  In particular the     homogeneous iterated matrix
multiplication complexity is polynomially equivalent to determinantal complexity.

 \subsection*{Acknowledgments}
We thank Neeraj Kayal for pointing us towards the himmc model of computation and Michael Forbes for important discussions.  {We also thank Michael Forbes and Amir
Shpilka for  help with the literature and     exposition.}
\section{Definitions, results, and overview}\label{defsect}

We first review the complexity measures corresponding to algebraic branching programs and iterated matrix
multiplication:
  
\begin{definition}[Nisan \cite{Nisan:1991:LBN:103418.103462}]
 An {\it Algebraic Branching Program} (ABP) over $\BC$ is a directed acyclic graph $\Gamma$ with a single
source $s$ and exactly one sink $t$. Each edge $e$ is labeled
with an affine linear function $\ell_e$ in the variables $\{y^i | 1 \leq i \leq M\}$. Every directed path
$p = e_1e_2 \cdots e_k$ represents the product $\Gamma_p := \prod_{j=1}^k
  \ell_{e_j}$ . For each vertex $v$ the polynomial $\Gamma_v$ is defined as $\sum_{p\in \mathcal{P}_{s,v}}\Gamma_p$ 
  where $\mathcal{P}_{s,v}$ is the set of paths from $s$ to $v$.
We say that $\Gamma_v$ \emph{is computed by $\Gamma$ at $v$}.
We also say that $\Gamma_t$ \emph{is computed by $\Gamma$} or that $\Gamma_t$ \emph{is the output of} $\Gamma$.

The {\it size} of $\Gamma$ is the number of vertices. 
Let $\abpc(P)$ denote the smallest size of an algebraic branching program that computes $P$.

An ABP is {\it layered} if we can assign a layer $i\in \BN$ to each vertex such that for all $i$, all edges from layer $i$
go to layer $i+1$. Let $\labpc(P)$ denote the the smallest size of a layered algebraic branching program that computes $P$.
Of course $\labpc(P)\geq \abpc(P)$.

An ABP is {\it homogeneous} if the polynomials computed at each vertex are all homogeneous.

A homogeneous ABP $\Gamma$ is {\it degree layered} if $\Gamma$ is layered and the layer of a vertex $v$ coincides with the degree of~$v$.
For a homogeneous $P$ let $\dlabpc(P)$ denote the the smallest size of a degree layered algebraic branching program that computes $P$.
Of course $\dlabpc(P) \geq \labpc(P)$.
\end{definition}

\begin{definition}
The  {\it iterated matrix multiplication complexity}  of
 a  polynomial $P(y)$ in $M$ variables, $\immc(P)$  is the smallest
$n$ such that there exists affine  linear maps $B_j: \BC^M\ra \Mat_n(\BC)$, $j=1\hd n$, such 
that $P(y)=\ttrace(B_n(y)\cdots B_1(y))$.
The {\it homogeneous iterated matrix multiplication complexity} of 
 a degree $m$ homogeneous polynomial $P\in S^{m}\BC^M$, $\himmc(P)$, is the smallest
$n$ such that there exist natural numbers $n_1\hd n_m$ with
$1=n_1$, and $n=n_1+\cdots + n_m$, and linear maps
$A_s: \BC^M\ra \Mat_{n_s\times n_{s+1}}$, $1\leq s\leq m$,  with the convention $n_{m+1}=n_1$, such that
$P(y)=A_m(y)\cdots A_1(y)$.
\end{definition}

A determinantal  expression \eqref{detexpr} is called {\it regular} if $\trank \Lambda=n-1$.
The {\it regular determinantal complexity} of $P$, denoted $\rdc(P)$, is the smallest $n$
for which a regular size $n$  determinatal expression exists.
 Von zur Gathen \cite{MR910987} showed that any determinantal expression of a polynomial
  whose singular locus has codimension at least five, e.g., the permanent, must be regular. In particular  $\rdc(\tperm_m)=\dc(\tperm_m)$.

 All the interesting regular determinantal expressions for   the permanent
and determinant that we are aware of correspond to homogeneous iterated matrix
multiplication expressions of the exact same complexity.  For example, the
expressions for $\tdet_m$ at the end of \S\ref{abpsect} are iterated
matrix multiplication,  
where  if the block matrices are labeled from left to right $B_1\hd B_m$, the product is
$B_m\cdots B_1$.

In \cite{LRpermdet}, it was shown that if one assumes   that
the symmetry group of the expression captures about half the symmetry
group of $\tperm_m$, then the smallest size such determinantal expression equals the known upper bound of $2^m-1$.  A key to the proof was the
utilization of the Howe-Young duality endofunctor  that exchanges
symmetrization and skew-symmetrization. Indeed, the result was
first proved for  half equivariant  regular determinantal expressions for the determinant, where the proof  was not so difficult, 
and then the endofunctor served as a guide as to how one
would need to prove it for the permanent. 
This motivated Question 2.18 of \cite{LRpermdet}: What is the growth
of the function $\rdc(\tdet_m)$?

\begin{proposition}\label{rdcdetm}  $\rdc(\tdet_m)\leq \frac 13(m^3-m)+1$.
\end{proposition}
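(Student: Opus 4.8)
The plan is to produce an explicit regular determinantal expression for $\tdet_m$ of the claimed size by converting a homogeneous iterated matrix multiplication expression for $\tdet_m$ into a determinantal one, and then carefully bounding the total matrix size. First I would recall (or construct) the standard homogeneous iterated matrix multiplication expression for $\tdet_m$: expanding $\tdet_m$ along its "first row, then second row, ..." structure via the Laplace-type / Samuelson--Berkowitz style recursion gives a product $A_m(y)\cdots A_1(y)$ with layer sizes $n_1=1$, and $n_s$ growing and then shrinking, where each $A_s$ is \emph{linear} in the entries $y^{i,j}$. The key arithmetic point is that the sum $n_1+\cdots+n_m$ of the layer sizes in the best such expression for $\tdet_m$ is exactly $\frac13(m^3-m)+1$ — this is the Mignon--Ressayre-type or clow-sequence count; I would verify this by exhibiting the layer sizes explicitly (they should be the "partial sums of $\binom{m-1}{k}$"-type quantities, or the triangular-number pattern $1,\, \binom{m}{2}$-ish, etc.) and summing the arithmetic series.

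Second, I would invoke the passage, already noted in the paper just before the statement, that "all the interesting regular determinantal expressions ... correspond to homogeneous iterated matrix multiplication expressions of the exact same complexity." Concretely: given a himmc expression $P(y)=A_m(y)\cdots A_1(y)$ with $n=n_1+\cdots+n_m$, one builds an $n\times n$ block matrix $\Lambda + \sum X^j y^j$ whose block structure is lower block-bidiagonal: the off-diagonal blocks carry the linear maps $A_s(y)$ and the diagonal blocks are identity blocks $-\Id_{n_s}$ (with the wrap-around block $A_m(y)$ in the corner), so that Schur-complement / cofactor expansion of the determinant of this block matrix returns exactly $\pm A_m(y)\cdots A_1(y) = \pm\tdet_m$. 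The constant part $\Lambda$ of this expression consists of the $-\Id$ blocks on the diagonal; since there are $n_1=1$ "missing" diagonal entry corresponding to the source/sink identification, $\Lambda$ has rank exactly $n-1$, so the expression is \emph{regular}. This gives $\rdc(\tdet_m)\le n = \frac13(m^3-m)+1$.

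The main obstacle I expect is purely bookkeeping rather than conceptual: pinning down the \emph{exact} layer sizes $n_s$ of the homogeneous IMM expression for $\tdet_m$ so that they sum to precisely $\frac13(m^3-m)+1$ and not merely $O(m^3)$. One must choose the right recursion (the naive "expand along a row" doubles sizes and is too big; the right one is the clow-sequence / Berkowitz recursion, or equivalently the one underlying the $m(m-1)/2$-type ABP for $\tdet_m$), confirm each $A_s$ is genuinely linear (not merely affine) in the $y^{i,j}$ so the resulting determinantal expression has no spurious constant terms beyond $\Lambda$, and check the sign/orientation so the output is $+\tdet_m$. A secondary subtlety is verifying regularity: one must make sure the identity blocks fill all but exactly one diagonal slot, i.e. that $\rank\Lambda=n-1$ exactly, which is where the hypothesis $n_1=1$ in the definition of $\himmc$ is used. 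Once these are in hand, the reduction from himmc to regular dc is the "folklore" construction the paper alludes to, and the bound follows by summing the series.
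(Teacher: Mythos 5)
Your proposal follows essentially the same route as the paper: the paper likewise builds the Mahajan--Vinay (clow-sequence) layered ABP for $\tdet_m$, counts its vertices to be $\tfrac{m^3}{3}-\tfrac{m}{3}+2$, and converts it to a regular determinantal expression by identifying source and sink and adding unit loops on the remaining vertices, which is exactly your block-bidiagonal adjacency matrix with $\rank\Lambda = n-1$. The one step you defer---pinning down the exact layer sizes---is the content of the paper's analysis, which shows layer $i$ has $i(i+1)/2+i(m-i)$ vertices and sums these (via the square pyramidal numbers) to the claimed total.
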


Proposition \ref{rdcdetm} is proved in \S\ref{abpsect},
where we show how to translate an ABP for a polynomial $P$  into a regular determinantal
expression for $P$. Translating work of Mahjan-Vinay \cite{MV:97} to determinantal expressions then gives the result.

\medskip

Consider the following variant on multi-linear circuits and formulas: 
Let $M = M_1+\cdots+M_m$ and let
$P\in \BC^{M_1}\otc \BC^{M_m}\subset  S^{m}(\BC^{M_1}\opc\BC^{M_m})$
be a multi-linear polynomial  (sometimes called a {\it set-multilinear polynomial}
in the computer science literature). We say a   homogeneous iterated
matrix multiplication (IMM)  presentation of $P$
is {\it block multi-linear} if each $A_j : \BC^M\ra \Mat_{n_j\times n_{j+1}}$ is
non-zero on exactly one factor.
The size $2^m-1$ determinantal expressions of \cite{Gre11,LRpermdet},
 as mentioned above, translate directly to homogeneous
iterated matrix multiplication expressions.
When one does this translation,  the resulting expressions 
  are block multilinear,
where we assume that the $M=m^2$ variables of $\tperm_m$ or $\det_m$   are  grouped column-wise, so $M_j=m$ for all $1 \leq j \leq m$.
We call block multilinear expressions with this grouping  \emph{column-wise multilinear}. 
 That is, a column-wise multilinear ABP for the determinant is an iterated matrix
multiplication, where each matrix only references variables from a single column
of the original matrix.

 The lower bound in the  following result appeared in \cite{Nisan:1991:LBN:103418.103462} in slightly different language:

\begin{theorem} \label{expthm} The smallest size column-wise multilinear IMM presentation 
of $\tdet_m$ and $\tperm_m$ is $2^m-1$.
 When translated to the regular determinantal expression model,
these expressions respectively correspond  to Grenet's expressions \cite{Gre11} in the case
of the permanent and the expressions  of \cite{LRpermdet} in the case of the determinant.
\end{theorem}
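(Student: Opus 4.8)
The plan is to prove the lower bound $2^m-1$ and the upper bound $2^m-1$ separately, and then to identify the extremal expressions with Grenet's and the determinant analogue.

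\textbf{Upper bound.} For the upper bound I would simply exhibit the column-wise multilinear IMM expressions already alluded to in the text: Grenet's construction for $\tperm_m$ and the analogous construction of \cite{LRpermdet} for $\tdet_m$. Concretely, one builds a (degree-layered) ABP whose vertices in layer $s$ are indexed by the $s$-element subsets $S\subseteq\{1,\dots,m\}$ (so layer $0$ and layer $m$ each have a single vertex, corresponding to $\emptyset$ and $\{1,\dots,m\}$), with an edge from $S$ in layer $s-1$ to $S\cup\{i\}$ in layer $s$ labeled by the single variable $y^{i,s}$ (for the determinant one adds a sign $\pm$ to some labels). Every source-to-sink path then corresponds to a permutation $\sigma$, contributing the monomial $y^{\sigma(1),1}\cdots y^{\sigma(m),m}$ (times $\sgn(\sigma)$ in the determinant case), so the output is exactly $\tperm_m$ (resp. $\tdet_m$). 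The total number of vertices is $\sum_{s=0}^m\binom{m}{s}=2^m$, but the vertex in layer $0$ and the vertex in layer $m$ can be merged in the trace/IMM formulation (via the convention $n_{m+1}=n_1$), or one simply notes that the relevant IMM size is $n_1+\cdots+n_m$ with $n_1=1$, giving size $2^m-1$. Since each edge references only variables $y^{i,s}$ from a single column $s$, the expression is column-wise multilinear.

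\textbf{Lower bound.} This is where the real content lies, and the reference to \cite{Nisan:1991:LBN:103418.103462} indicates the route: translate a column-wise multilinear IMM expression into a non-commutative (or set-multilinear) formula/ABP and apply Nisan's rank argument. A column-wise multilinear IMM of size $n=1+n_2+\cdots+n_m$ for $\tdet_m$ or $\tperm_m$ gives linear maps $A_s\colon\BC^{m}\to\Mat_{n_s\times n_{s+1}}$ (each reading only column $s$) with $P=A_m\cdots A_1$. Fix a splitting point $1\le k\le m-1$ and regard the product $A_m\cdots A_{k+1}$ as living in $\Mat_{1\times n_{k+1}}\otimes(\text{polynomials in columns } k+1,\dots,m)$ and $A_k\cdots A_1$ as living in $\Mat_{n_{k+1}\times 1}\otimes(\text{polynomials in columns }1,\dots,k)$. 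Multiplying them out exhibits $P$ as a sum of $n_{k+1}$ products (left factor in the first block of variables) $\times$ (right factor in the second block). This forces the rank of the ``communication matrix'' $M_k(P)$ — whose rows are indexed by monomials in columns $1,\dots,k$, columns by monomials in columns $k+1,\dots,m$, entry the coefficient — to be at most $n_{k+1}$. The final step is the combinatorial computation that for $P=\tperm_m$ (and likewise $\tdet_m$, where the $\pm$ signs do not affect rank) this matrix has full rank $\binom{m}{k}$: its rows are indexed by injections $\{1,\dots,k\}\to\{1,\dots,m\}$, its columns by injections $\{k+1,\dots,m\}\to\{1,\dots,m\}$, and the entry is nonzero exactly when the two images are disjoint; after collapsing equal images one gets, up to a nonzero diagonal scaling, the disjointness matrix between $k$-subsets and $(m-k)$-subsets of $\{1,\dots,m\}$, which is a (anti-)diagonal permutation-type matrix and hence invertible, of size $\binom{m}{k}$. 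Therefore $n_{k+1}\ge\binom{m}{k}$ for every $k$, and summing $n=1+\sum_{s=2}^{m}n_s\ge 1+\sum_{k=1}^{m-1}\binom{m}{k}=2^m-1$.

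\textbf{Identification of the extremal expressions.} Finally, to get the last sentence of the theorem, I would argue that equality throughout forces $n_{s}=\binom{m}{s-1}$ for each $s$, i.e. the ABP has exactly $\binom{m}{s}$ vertices in layer $s$, each carrying (as the polynomial it computes) a single monomial corresponding to a choice of an $s$-subset of columns — one reads this off from the rank bound being tight at every cut. Tracking which monomial sits at which vertex, and using that each edge is a single variable from one column, one sees the graph must be (an isomorphic copy of) the subset lattice with the labeling described in the upper-bound construction, which is precisely Grenet's expression for $\tperm_m$ and the \cite{LRpermdet} expression for $\tdet_m$.

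\textbf{Main obstacle.} The crux — and the step I expect to require the most care — is the rank computation for $M_k(P)$ together with the bookkeeping that a tight column-wise multilinear IMM is forced, vertex by vertex and edge by edge, to be the subset-lattice construction; the translation to Nisan's framework and the rank $\le n_{k+1}$ bound are routine once the set-multilinear structure is set up correctly.
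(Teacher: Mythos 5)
Your lower and upper bounds are correct and take essentially the same route as the paper. The paper's proof is exactly the Nisan-style cut argument, phrased as an ``evaluation dimension'' bound: specializing the labels of all layers beyond $s$ to constants shows that the polynomials $\Gamma_v$ computed at layer $s$ must span the $\binom{m}{s}$-dimensional space of terms $\sum_{\sigma} c_\sigma y^{1,\sigma(1)}\cdots y^{s,\sigma(s)}$, forcing at least $\binom{m}{s}$ vertices in layer $s$; summing and using $\himmc = \dlabpc - 1$ gives $2^m-1$. This is your communication-matrix rank bound in different words (the span of the layer-$s$ polynomials is a lower bound for your $\rank M_s(P)\le n_{s+1}$), and your explicit verification via the disjointness matrix is a fine way to carry out the step the paper leaves implicit. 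The upper bound in the paper is likewise just the exhibition of Grenet's expression and the expression of \cite{LRpermdet}, as in your subset-lattice description.

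One caveat: your paragraph on ``identification of the extremal expressions'' overreaches relative to both the theorem and the paper. The paper does not prove (and does not claim) a uniqueness statement; the second sentence of the theorem is established simply by observing that the known size-$(2^m-1)$ determinantal expressions translate to column-wise multilinear IMMs of the same size. Your uniqueness sketch as written is also not correct: in Grenet's construction the layer-$s$ vertex indexed by an $s$-subset $S$ computes the partial permanent of rows $S$ against columns $1,\dots,s$ (a sum of $s!$ monomials), not a single monomial, and any optimal presentation is at best unique up to inserting $g_s^{-1}g_s$ with $g_s\in \GL_{n_{s+1}}$ between consecutive factors, so the underlying graph is not literally forced to be the subset lattice. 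If you drop or soften that paragraph, the remainder stands.
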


\begin{remark}\label{strength}  The $2^m-1$ lower bound for the permanent (resp. determinant) of \cite{LRpermdet} was obtained by assuming
\lq\lq half-equivariance\rq\rq :
equivariance with respect to left multiplication by diagonal matrices
with determinant one (the torus)
and permutation matrices (resp. equivariance with respect to left multiplication by
matrices with determinant one and assuming a regular expression).   
The optimal   determinantal expression  for the permanent or determinant with half-equivariance
is   equivalent to a
  column-wise multilinear  homogeneous iterated matrix multiplication
expression of the same size, as can be seen in the proofs in \cite{LRpermdet}.  
On the other hand,
column-wise multilinear IMM presentations do not   imply half-equivariance,
  nor is there an   implication in the other direction. It is interesting that
  these two different restricted models have the same optimal expression.  
\end{remark}

Theorem \ref{expthm} even holds \lq\lq locally\rq\rq :

\begin{theorem}\label{nosqueezethm}  
Any IMM presentation of $\tdet_m(y)$ or $\tperm_m(y)$
with   a size   $L\times R$ sub-matrix  of $y$ appearing only in 
$A_1\hd A_{L}$ must have size
at least
$\binom {R} {L}$.
\end{theorem}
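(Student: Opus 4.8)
The plan is to split the iterated matrix multiplication between its $L$-th and $(L+1)$-st factors, compare the resulting low-rank decomposition of the polynomial with its generalized Laplace expansion along $L$ rows, and read off $\binom RL$ as a tensor rank. Write $[m]=\{1,\dots,m\}$ and, for $I,J\subseteq[m]$, let $y_{I,J}$ denote the submatrix of $y=(y^{i,j})$ with rows $I$ and columns $J$. Since permuting rows and columns multiplies $\tdet_m$ by a sign and fixes $\tperm_m$, we may assume the given $L\times R$ submatrix occupies rows $\{1,\dots,L\}$ and columns $\{1,\dots,R\}$; call its entries the \emph{inner} variables, and the variables in rows $\{L+1,\dots,m\}$ the \emph{outer} variables. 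We may assume $1\le L\le m-1$ and $L\le R$, since otherwise $\binom RL\le1\le n$ and we are done. Grouping the $m$ IMM factors as (first $L$)(last $m-L$) and using $n_1=1$, we obtain a decomposition $P(y)=\sum_{k=1}^{n_{L+1}}F_k(y)G_k(y)$, where $n_{L+1}$ is the dimension of the interface between the two blocks (so $n\ge n_{L+1}$), $F_k$ collects the contribution of $A_1,\dots,A_L$ and is homogeneous of degree $L$ in the variables occurring in $A_1,\dots,A_L$, and $G_k$ collects the contribution of $A_{L+1},\dots,A_m$ and is homogeneous of degree $m-L$ in the variables occurring in $A_{L+1},\dots,A_m$. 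By hypothesis the inner variables do not occur in $A_{L+1},\dots,A_m$, so no $G_k$ involves an inner variable. It suffices to prove $n_{L+1}\ge\binom RL$.

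Now substitute $y^{i,j}=0$ for all $i\le L$, $j>R$. After this the only variables appearing in rows $1,\dots,L$ are inner ones, so each $G_k$ involves only outer variables while each $F_k$ involves only inner and outer variables. Bigrade $\BC[\text{inner},\text{outer}]$ by the pair (degree in inner, degree in outer) and take the $(L,m-L)$-graded component of the identity $P=\sum_kF_kG_k$: since each $G_k$ sits in bidegree $(0,m-L)$, only the bidegree-$(L,0)$ part $\widehat F_k$ of $F_k$ — the polynomial in inner variables obtained by discarding every monomial of $F_k$ that uses an outer variable — contributes. On the other hand the generalized Laplace expansion along rows $\{1,\dots,L\}$ reads $\tdet_m(y)=\sum_{J\in\binom{[m]}{L}}\varepsilon_J\,\tdet(y_{\{1,\dots,L\},J})\,\tdet(y_{\{L+1,\dots,m\},[m]\setminus J})$, and the same with $\varepsilon_J\equiv1$ for $\tperm_m$; after the substitution the terms with $J\not\subseteq\{1,\dots,R\}$ vanish and the surviving ones already have bidegree $(L,m-L)$. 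Writing $D_J:=\tdet(y_{\{1,\dots,L\},J})$ and $E_J:=\tdet(y_{\{L+1,\dots,m\},[m]\setminus J})$, and their permanental analogues in the $\tperm_m$ case, we obtain, in $S^LV_{\mathrm{in}}\otimes S^{m-L}V_{\mathrm{out}}$ where $V_{\mathrm{in}},V_{\mathrm{out}}$ are the spans of the inner and outer variables,
\[
\sum_{k=1}^{n_{L+1}}\widehat F_k\otimes G_k=\sum_{J\in\binom{\{1,\dots,R\}}{L}}\varepsilon_J\,D_J\otimes E_J .
\]

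To finish, the maximal minors $\{D_J\}_J$ are linearly independent: for $J=\{j_1<\dots<j_L\}$ the monomial $y^{1,j_1}\cdots y^{L,j_L}$ occurs in $D_J$ but in no $D_{J'}$ with $J'\ne J$, and the same holds for the permanental minors. Likewise $\{E_J\}_J$ is linearly independent, since $J\mapsto[m]\setminus J$ is injective on these index sets and minors (or permanents) with distinct column sets have disjoint monomial supports. Finally, if a tensor $\sum_{j=1}^N u_j\otimes v_j$ has both $\{u_j\}$ and $\{v_j\}$ linearly independent, it cannot be written as a sum of fewer than $N$ simple tensors: viewing it as a linear map $(S^LV_{\mathrm{in}})^{*}\to S^{m-L}V_{\mathrm{out}}$, any representation by $r$ simple tensors bounds its rank by $r$, while linear independence of the $u_j$ and of the $v_j$ bounds the rank below by $N$. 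Applied to the displayed identity this gives $n_{L+1}\ge\binom RL$, hence $n\ge\binom RL$.

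I expect no essential difficulty; the place to be careful is the bookkeeping in the bigraded extraction — verifying that the hypothesis genuinely forces each $G_k$ to be free of inner variables, and that after zeroing the rows-$\le L$, columns-$>R$ entries the Laplace terms with $J\not\subseteq\{1,\dots,R\}$ really vanish while the rest land exactly in bidegree $(L,m-L)$. This is the classical Nisan-style "cut the branching program and read off a tensor-rank lower bound" argument, here fed with the Laplace-type expansion in place of a set-multilinear decomposition. (For an affine, non-homogeneous IMM one first passes to the linear parts of the affine forms, which is harmless since $\tdet_m$ and $\tperm_m$ are homogeneous.)
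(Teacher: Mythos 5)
Your proof is correct and is essentially the paper's argument: the paper (via the proof of Theorem \ref{expthm}) also cuts the iterated product at the interface between $A_L$ and $A_{L+1}$ and bounds the interface dimension below by the $\binom{R}{L}$ linearly independent degree-$L$ minors/sub-permanents on rows $1,\dots,L$ and columns in $\{1,\dots,R\}$. You merely make the same evaluation-dimension argument explicit as a two-sided tensor-rank bound using the Laplace expansion, which is a more detailed write-up of the identical idea.
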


\begin{remark} Theorem \ref{nosqueezethm} shows that
if $L(m),R(m)$ are functions such that
$\binom{R(m)}{L(m)}$ grows super-polynomially,
  any sequence  of 
IMM   presentations of $\tperm_m$ (resp.
  IMM presentations  of $\tdet_m$) of polynomial size cannot
have  a size   $L(m)\times R(m)$ sub-matrix
(or a size $R(m)\times L(m)$ sub-matrix)   of $y$ appearing only in 
$A_1\hd A_{L(m)}$  or $A_{m}\hd A_{m-L(m)}$.
In particular, if $R(m)=\a m$ for some constant $0<\a\leq 1$, then
to have a polynomial size presentation, 
$L(m)$ must be bounded above by a constant. 
\end{remark}

\medskip

Our second restricted model  comes from \cite{DBLP:journals/corr/AravindJ15}.
In \cite{DBLP:journals/corr/AravindJ15} they  introduce {\it read-$k$ determinants},  determinantal expressions  where
the $X^{ij}$ have at most $k$ nonzero entries, and show that $\tperm_m$ cannot be expressed
as a read once determinant over $\BR$ when $m\geq 5$. The notion of read-$k$ is not natural from a geometric perspective
as it is not preserved by the group preserving $\tdet_n$, however in section 5 of  the same paper  
they suggest a more meaningful analog inspired by \cite{MR2745772}    called {\it rank}-$k$ determinants:

\begin{definition} A polynomial 
$P(y^1\hd y^M)$ admits a  \it  rank  $k$  determinantal expression if there is a determinantal expression
$P(y)=\tdet(\Lambda+ \sum_jy^jX^j )$ with $\trank X^{j}\leq k$.
\end{definition}

 This definition is   reasonable when $P$ is the permanent because the
individual $y^{i,j}$ are defined up to scale.  In \S\ref{rankksect} we show:

\begin{theorem}\label{rankone}   {Neither   $\tperm_m$ nor $\tdet_m$  admits a rank one  regular determinantal expression over $\BC$ when $m\geq 3$. 
In particular, either   $\tperm_m$ nor $\tdet_m$  admits a read once regular determinantal expression over $\BC$ when $m\geq 3$.  }
\end{theorem}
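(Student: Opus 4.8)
\emph{Step 1: normalization.} The plan is to put such an expression into $2\times 2$ block form, read off from homogeneity a chain of vanishing ``sandwich'' identities, and then derive a contradiction from the combinatorics of which variables can interact, together with a rigidity property of the sign patterns of $\tdet_m$ and $\tperm_m$. So suppose, for contradiction, that $P=\tdet_n(\Lambda+\sum_{j=1}^M y^jX^j)$ with $\rank\Lambda=n-1$ and $\rank X^j\le 1$, where $P\in\{\tperm_m,\tdet_m\}$ and $m\ge 3$. Replacing each $X^j$ by $gX^jh$ for suitable invertible $g,h$ (which preserves all the ranks) we may assume $\Lambda=\begin{pmatrix} I_{n-1}&0\\0&0\end{pmatrix}$, so the matrix becomes $\begin{pmatrix} I_{n-1}+A(y)&b(y)\\ c(y)^{\transpose}&d(y)\end{pmatrix}$ with $A,b,c,d$ linear in $y$. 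Expanding the determinant along the last row and column gives the polynomial identity
\[
P = d(y)\,\det\!\bigl(I_{n-1}+A(y)\bigr) - c(y)^{\transpose}\operatorname{adj}\!\bigl(I_{n-1}+A(y)\bigr)\,b(y).
\]
Writing $\operatorname{adj}(I_{n-1}+A)=\sum_{k\ge 0}\Phi_k$ and $\det(I_{n-1}+A)=\sum_{k\ge 0}\delta_k$ by homogeneous degree in $y$ (so $\Phi_0=I_{n-1}$, $\delta_0=1$, and $\Phi_k=\sum_{i=0}^k(-1)^i\delta_{k-i}A^i$ from $\operatorname{adj}(I+A)(I+A)=\det(I+A)I$), and matching homogeneous components of the displayed identity against the degree-$m$ polynomial $P$, one obtains successively: $d\equiv 0$; then $c^{\transpose}b\equiv 0$; then $c(y)^{\transpose}A(y)^{\,i}b(y)\equiv 0$ for all $0\le i\le m-3$; and finally $P=(-1)^{m-1}\,c(y)^{\transpose}A(y)^{\,m-2}b(y)$.

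\emph{Step 2: the rank-one structure.} Write $X^j=u_jv_j^{\transpose}$ in the chosen basis, with $u_j=(u_j';(u_j)_n)$ and $v_j=(v_j';(v_j)_n)$. Then $d(y)=\sum_j y^j (u_j)_n(v_j)_n$, so $d\equiv 0$ means that for every $j$ either $(u_j)_n=0$ or $(v_j)_n=0$; in particular every variable $y^j$ occurs in at most one of $b,c$, giving a partition $[M]=B\sqcup C\sqcup D$, where $B$ indexes the variables occurring in $b$, $C$ those occurring in $c$, and $D$ those occurring only in $A$. Moreover $c^{\transpose}b\equiv 0$ together with $B\cap C=\emptyset$ forces $v_j'\cdot u_{j'}'=0$ for all $j\in C$, $j'\in B$, i.e. $\langle v_j':j\in C\rangle\perp\langle u_{j'}':j'\in B\rangle$.

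\emph{Step 3: the combinatorial core.} Expand $c^{\transpose}A^{m-2}b$ as a signed sum over ``walks'' $j_0\to j_1\to\cdots\to j_{m-1}$ with $j_0\in C$, $j_{m-1}\in B$, whose weight is a product of corner scalars and factors $v_{j_t}'\cdot u_{j_{t+1}}'$. Since $P$ is $\tdet_m$ or $\tperm_m$, its monomials are precisely the $m!$ permutation monomials $\prod_i y^{i\sigma(i)}$, each with coefficient $\pm 1$; hence for every permutation $\sigma$ the cells $\{(i,\sigma(i))\}$ support such a walk, so $B$ and $C$ each meet every $m\times m$ permutation matrix. By K\"onig's theorem each of $B,C$ then contains a combinatorial rectangle $R\times S$ with $|R|+|S|\ge m+1$, and $B\cap C=\emptyset$ forces these rectangles to share no row, or no column. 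Feeding this back in and using the orthogonality of Step 2 to annihilate most walks, the coefficient of each permutation monomial collapses to a product of the form $F_{\bullet}G_{\bullet}$, with $F$ recording a $C$--$D$ interaction and $G$ a $D$--$B$ interaction; requiring all $m!$ of these to realize the sign pattern of $\tdet_m$ (or the all-ones pattern of $\tperm_m$) is then impossible, because that pattern (essentially the determinant/permanent tensor on three index sets of size $m$) has tensor rank too large for the forced factorization. A model case is $n=m=3$: there the constraints of Steps 1--2 pin the relevant vectors onto a line, the coefficient tensor becomes rank $1$, and the $3\times 3\times 3$ Levi-Civita tensor is not rank $1$; the general case is reduced to such a rigidity by also exploiting the remaining identities $c^{\transpose}A^ib\equiv 0$ and the vanishing of the degree $>m$ terms.

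\emph{Main obstacle, and conclusion.} The crux is the last assertion of Step 3: turning the structural data (the variable partition, the orthogonality relation, the $m!$ prescribed $\pm 1$ coefficients, and the auxiliary sandwich relations) into an outright contradiction uniformly in the size $n$. Geometrically this is the statement that the affine space $\Lambda+\langle X^1,\dots,X^M\rangle$, which passes through the smooth (corank-one) point $\Lambda$ of the hypersurface $\{\tdet_n=0\}$ and osculates it to order $m$ there --- the higher fundamental forms of $\{\tdet_n=0\}$ at a corank-one point being exactly the maps $N\mapsto c_N^{\transpose}A_N^{\,k}b_N$ --- cannot cut out $\tdet_m$ or $\tperm_m$, and it is the rank-one hypothesis (forcing each $A_j$, hence the whole interaction pattern, to be rank one) that makes this rigid enough to fail. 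Once the rank-one case is settled, the read-once case is immediate, since a read-once matrix $X^j$ has at most one nonzero entry and hence rank at most $1$.
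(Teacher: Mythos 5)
Your Steps 1--2 are sound and set up essentially the same skeleton the paper uses: after normalizing $\Lambda$ to be the identity with the $(1,1)$ (or $(n,n)$) entry zeroed, homogeneity forces the ``corner'' entry and the bilinear form $c^{\transpose}b$ to vanish and expresses $P$ as the top-degree sandwich $c^{\transpose}A^{m-2}b$; this is the content of Lemma \ref{lem:properties}(I),(II) and, for $m=3$, of Lemma \ref{lem:thirdposition} (each monomial needs one variable in the first column at $(k,1)$, one in the first row at $(1,\ell)$, and one at the completing position $(\ell,k)$). The read-once corollary at the end is also fine.

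The proof, however, is not complete: the entire contradiction lives in your Step 3, and there you only assert it. The claim that the $m!$ prescribed $\pm1$ coefficients force a factorization whose ``tensor rank is too large'' is never given a precise statement, let alone a proof; the K\"onig-type rectangle observation is not developed into anything; and your ``model case $n=m=3$'' does not address the actual difficulty, since in a determinantal expression for $\tperm_3$ or $\tdet_3$ the ambient size $n$ is arbitrary (and is in fact at least $7$), so nothing pins the vectors $u_j',v_j'$ onto a line. You acknowledge this yourself under ``Main obstacle.'' Two concrete things are missing relative to the paper's argument. First, a reduction to $m=3$: the paper's Lemma \ref{redlem} gets this cheaply by setting $y^{m,u}=y^{v,m}=0$ and $y^{m,m}=1$, which preserves both the rank-$k$ and the regularity hypotheses; your proposal instead tries to treat general $m$ head-on, which only makes Step 3 harder. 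Second, and essentially, the case analysis of Lemma \ref{threelem}: there the rank-one hypothesis is exploited by using the stabilizer $G_{\tdet_n,\Lambda}$ (permutation and elimination conjugations, first row/column operations) to normalize the nonzero entries of the matrices $X^{i,j}$ one at a time --- e.g.\ forcing $X^{1,1}$ and $X^{3,3}$ to have a single nonzero entry --- and then repeatedly applying the three-position lemma to various monomials ($y^{1,1}y^{2,2}y^{3,3}$, $y^{1,1}y^{2,3}y^{3,2}$, $y^{1,2}y^{2,1}y^{3,3}$, \dots) until a forbidden monomial such as $y^{3,1}y^{1,2}y^{3,3}$ is shown to occur. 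Without a worked-out substitute for that analysis (or a genuine proof of your tensor-rank rigidity claim), the argument does not establish the theorem.
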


\begin{remark}  {Anderson, Shpilka and Volk (personal communication
from Shpilka) have shown that if a polynomial $P$ in $n$ variables admits a rank
$k$ determinantal expression of size $s$, then it admits a read-$k$ determinantal
expression
of size $s+2nk$. This combined with the results of \cite{DBLP:journals/corr/AravindJ15}
gives an alternative proof of Theorem \ref{rankone} over $\BR$  
and finite fields where  $-3$ is a quadratic non-residue for $m\geq 5$.}
\end{remark}

\section{Algebraic branching programs and determinants}\label{abpsect}

In this section we describe how to obtain a size $O(m^3)$ regular determinantal expression for  $\tdet_m$.
 We use standard techniques about algebraic branching programs and an algorithm described by Mahajan and Vinay \cite{MV:97}.

\begin{proposition}\label{pro:constructadjmatrix} Let $P$ be a polynomial. Then 
$\dc(P)\leq \labpc(P)-1$.
Moreover, if the constant term  of $P$ is zero, then we also have $\rdc(P)\leq \labpc(P)-1$.
\end{proposition}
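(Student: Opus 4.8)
The plan is to build an explicit layered ABP-to-determinant translation via the ``path polynomial as a determinant of $I - A$'' trick. First I would take a layered ABP $\Gamma$ of size $s = \labpc(P)$ computing $P$, with source $\vert$ and sink $t$. Number the vertices $1, \dots, s$ so that $1 = s$ (the source), $s = t$ (the sink), and edges respect the layer order; since $\Gamma$ is layered and acyclic, we may take the adjacency matrix $A(y)$, whose $(i,j)$ entry is the affine linear label $\ell_e$ if there is an edge $e$ from $i$ to $j$ and $0$ otherwise, to be strictly upper triangular after this reordering. Then for each vertex $v$ the generating function $\sum_{p \in \mathcal P_{\vert,v}} \Gamma_p$ equals the $(\vert,v)$ entry of $(I - A)^{-1} = I + A + A^2 + \cdots$ (the sum is finite because $A$ is nilpotent). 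In particular $\Gamma_t = P$ is the $(1,s)$ entry of $(I-A)\inv$, which by Cramer's rule equals $\pm$ the $(s,1)$ cofactor of $I - A$, i.e. $\pm \det$ of the $(s-1)\times(s-1)$ matrix $N(y)$ obtained from $I - A$ by deleting row $s$ and column $1$. Adjusting a sign (negate one row, or reverse an edge orientation) makes this exactly $P = \det_{s-1}(N(y))$, which is a size $s - 1 = \labpc(P) - 1$ determinantal expression, giving $\dc(P) \le \labpc(P) - 1$.

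For the ``moreover'' clause I need to check that when $P$ has zero constant term, the resulting expression is regular, i.e. the rank of the constant term $\Lambda = N(0)$ is exactly $(s-1) - 1 = s - 2$. Here $N(0)$ is $I - A(0)$ with row $s$ and column $1$ removed; since $A(0)$ (the constant part of the adjacency matrix) is still strictly upper triangular in our ordering, $I - A(0)$ is upper triangular with $1$'s on the diagonal, hence invertible, and deleting row $s$ and column $1$ from an $s \times s$ upper-triangular-unipotent matrix yields a matrix whose rank I would compute directly: it drops rank by at most $1$, so $\trank N(0) \ge s - 2$. The reverse inequality $\trank N(0) \le s - 2$ is forced because $\det N(0) = \pm[(I - A(0))\inv]_{1,s} = \pm(\text{constant term of } P) = 0$. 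Hence $\trank N(0) = s - 2$ exactly, so the expression is regular and $\rdc(P) \le \labpc(P) - 1$.

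The step I expect to be the main (though still minor) obstacle is bookkeeping: verifying that a single global vertex ordering simultaneously makes $A(y)$ strictly upper triangular (using the layered/acyclic structure) \emph{and} puts the source first and sink last, and then tracking the sign from Cramer's rule and the deleted row/column so that the final identity is $P = \det(N)$ on the nose rather than $\pm P$. I would handle the sign either by reversing one edge's orientation in $\Gamma$ (which only changes the role of one off-diagonal entry) or by scaling one row of $N$ by $-1$, neither of which affects the rank of the constant term. Everything else is the standard Mahajan--Vinay-style manipulation, and the regularity claim reduces, as above, to the observation that a unipotent triangular matrix with one row and one complementary column deleted has corank exactly $1$ precisely when its determinant vanishes.
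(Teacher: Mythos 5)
Your proposal is correct, but it follows a different (equally standard) translation from ABPs to determinants than the paper does. The paper identifies the source with the sink, attaches a self-loop labeled $1$ at every non-root vertex, and takes the determinant of the resulting adjacency matrix; the sign is controlled by the cycle-cover interpretation together with the fact that in a \emph{layered} ABP all source-to-sink paths have the same length, and regularity follows because the constant part is unitriangular except for a single $0$ in the $(1,1)$ slot, hence has corank $0$ or $1$. You instead use the Cramer's-rule/cofactor route: $P$ is the $(1,s)$ entry of $(I-A)^{-1}$, hence $\pm\det$ of $I-A$ with row $s$ and column $1$ deleted, and regularity follows from the same corank-at-most-one observation applied to the deleted unipotent matrix plus $\det N(0)=\pm P(0)=0$. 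Both arguments are sound and yield matrices of the same size $\labpc(P)-1$; the two constant matrices are different but play identical roles. One genuine advantage of your route is that it never uses layeredness --- the algebraic identity $[(I-A)^{-1}]_{1,s}=(-1)^{1+s}\det N$ handles all signs uniformly for any topologically ordered acyclic ABP --- so your argument actually proves the stronger bounds $\dc(P)\leq \abpc(P)-1$ and (for $P$ with zero constant term) $\rdc(P)\leq \abpc(P)-1$, whereas the paper's cycle-cover sign argument genuinely needs all paths to have equal length. The paper's version, on the other hand, produces the explicit unitriangular-plus-one-zero normal form for $\Lambda$ that is displayed and reused later in the text. The only blemishes in your write-up are notational (you overload $s$ as both the ABP size and the source vertex) and the sign fix, which you correctly note can be absorbed by negating one row without affecting the rank of the constant part.
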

\begin{proof}
>From a layered  algebraic branching program $\G^\text{algbp}$ we create a directed graph $\G^\text{root}$ 
by identifying the source and the sink vertex
and by calling the resulting vertex the root vertex.
>From $\G^\text{root}$ we create a directed graph $\G^{\text{loops}}$
by adding at each non-root vertex a loop that is labeled with the constant~1.
Let $A$ denote the adjacency matrix of $\G^{\text{loops}}$.
Since $\G^\text{algbp}$ is layered, each path from the source to the sink in $\G^\text{algbp}$ has the same length.
If that length is even, then $\det(A)$ equals the output of $\G^\text{algbp}$,
otherwise $-\det(A)$ equals the output of $\G^\text{algbp}$.
This proves the first part.

 Now assume $P$ has no constant term. 
Let $\Lambda$ denote the constant part of $A$, so $\Lambda$ is a complex square matrix.
Since $\G^\text{algbp}$ is layered we ignore all edges coming out of the sink vertex of $\G^{\text{algbp}}$
and   order  all vertices of $\G^{\text{algbp}}$ topologically, i.e., if there is an edge from vertex $u$ to
vertex $v$, then $u$ precedes $v$ in the order.
We use this order  to specify   the order in which we write down $\Lambda$.
Since the order is topological, $\Lambda$ is lower triangular with one exception: The first row
can have additional nonzero entries.
By construction of the loops in $\G^{\text{loops}}$ the main diagonal of $\Lambda$ is filled with 1s everywhere but at the top left where $\Lambda$ has a 0.
Thus $\corank(\Lambda)=1$ or $\corank(\Lambda)=0$.
But if $\corank(\Lambda)=0$, then the constant  term of $P$ is $\det(\Lambda)\neq 0$,
which is a contradiction to the assumption.
\end{proof}

\begin{proposition}\label{pro:MahVin}
$\labpc(\det_m) \leq \frac {m^3} 3 - \frac m 3 + 2$.
\end{proposition}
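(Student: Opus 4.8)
The plan is to construct a layered ABP for $\det_m$ directly from the Mahajan--Vinay combinatorial characterization of the determinant, and then count vertices. Recall that Mahajan and Vinay express $\det_m$ as a signed sum over ``clow sequences'' (closed ordered walks) in the complete digraph on $m$ vertices, where a clow is a closed walk whose minimum vertex (the ``head'') is visited only once, and a clow sequence is a concatenation of clows of strictly increasing heads, of total edge-length $m$. The key point is that this sum can be organized into a layered, polynomial-size graph: one introduces vertices indexed by a triple consisting of the current head $h$, the current vertex $u$, and the number of edges traversed so far (the layer). I would first recall/state this characterization (citing \cite{MV:97}), then describe the vertex and edge set of the resulting ABP $\Gamma$ explicitly.

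Concretely, the layers $0,1,\dots,m$ track how many edges have been used. In layers $1$ through $m-1$, a vertex is a pair $(h,u)$ with $1\le h\le u\le m$ (head $h$, current vertex $u$, with the constraint that $u\ge h$ since $h$ is the minimum of the current clow), and there is a single source in layer $0$ and a single sink in layer $m$. Edges come in two types: a ``continue the current clow'' edge from $(h,u)$ to $(h,v)$ labeled by the matrix entry $a_{u,v}$ (for an appropriate range of $v$), and a ``close this clow and open a new one'' edge from $(h,u)$ to $(h',h')$ for $h'>h$, labeled by $a_{u,h}$ times a sign $-1$ (this records closing the walk by returning to the head and charging the sign from the clow-sequence formula). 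One checks that directed source-to-sink paths in $\Gamma$ are exactly clow sequences of length $m$, with the product of edge labels equal to the signed weight, so $\Gamma_t=\det_m$. The main bookkeeping obstacle — and the step I expect to be the crux — is getting the signs and the head-ordering constraints exactly right so that each clow sequence is counted once with the correct sign, and simultaneously keeping the vertex count minimal: one has to be careful about the boundary layers (the first clow must start at head $1$ is \emph{not} required, but the heads must increase, which forces the $h\le u$ constraint and a careful treatment of layer $1$ and layer $m$).

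Finally I would count vertices. The number of pairs $(h,u)$ with $1\le h\le u\le m$ is $\binom{m+1}{2}=\frac{m^2+m}{2}$, and there are $m-1$ intermediate layers, giving roughly $\frac{(m-1)(m^2+m)}{2}$ before optimization; a more careful accounting — noticing that in the last intermediate layers the reachable heads are restricted, and merging degenerate vertices — brings this down to $\frac{m^3}{3}-\frac{m}{3}+2$. (This is the same count that Mahajan--Vinay obtain for the size of their circuit; the point here is simply that their construction is a \emph{layered} ABP, so $\labpc$ applies.) I would present the vertex count as the last step, verifying the arithmetic $\frac{m^3-m}{3}+2$ against small cases $m=2,3$ as a sanity check. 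Combined with Proposition~\ref{pro:constructadjmatrix} (applied to $P=\det_m$, which has zero constant term), this immediately yields $\rdc(\det_m)\le \frac{m^3-m}{3}+1$, proving Proposition~\ref{rdcdetm}.
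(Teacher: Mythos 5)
Your overall strategy is the same as the paper's: realize the Mahajan--Vinay clow-sequence computation as a layered ABP whose vertices are states (current head, current vertex, number of edges used), and then count vertices. However, there is a genuine gap exactly at the quantitative step that the proposition is about. The vertex set you actually specify --- all pairs $(h,u)$ with $1\le h\le u\le m$ in every intermediate layer --- has $\binom{m+1}{2}$ vertices per layer, for a total of $(m-1)\tbinom{m+1}{2}+2=\frac{m^3-m}{2}+2$. That proves only $\labpc(\det_m)\le \frac{m^3-m}{2}+2$, which is asymptotically $3/2$ times the claimed bound. The passage from $\sim m^3/2$ to $\frac{m^3}{3}-\frac m3+2$ is asserted (``a more careful accounting \dots brings this down'') rather than carried out, and your stated intuition for where the savings come from is backwards: you say the reachable heads are restricted ``in the last intermediate layers,'' whereas the pruning actually happens in the \emph{early} layers. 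The correct state space keeps, in layer $i$, only the triples $(h,u,i)$ with $2\le u\le m$ and $1\le h\le \min(i,u)$, together with a single ``freshly opened clow'' vertex $(i+1,i+1,i)$; the constraint $h\le i$ (head bounded by the layer index) is what kills roughly a third of your vertices. With that vertex set, layer $i$ has $\sum_{u=1}^m\min(i,u)=\tfrac{i(i+1)}2+i(m-i)$ vertices, and summing over $i=1,\dots,m-1$ via the square pyramidal numbers gives exactly $\frac{m^3}3-\frac m3+2$.

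To close the gap you would need to (a) justify why states with $h>i$ can be discarded without changing the computed polynomial (they are not on any source-to-sink path computing a surviving clow sequence: with the source fixed at head $1$ and heads strictly increasing by at least one per completed clow, a partial clow sequence that has consumed $i$ edges and sits inside a clow with head $h>i$ cannot occur), and (b) actually perform the sum above. Both your proposal and the paper defer the correctness of the construction (that the signed sum over paths equals $\det_m$) to \cite{MV:97}, which is fine; but since the whole content of Proposition~\ref{pro:MahVin} is the exact constant $\tfrac13$ and the exact formula, the explicit restricted vertex set and the resulting count cannot be omitted.
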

\begin{proof}
This is an analysis of the algorithm in \cite{MV:97}
with all improvements that are described in the article.
We construct an explicit layered ABP $\G$. 
Each vertex of $\G$ is a triple of three nonnegative integers $(h,u,i)$, where $i$ indicates its layer.
The following triples appear as vertices in $\G$.
\begin{itemize}
\item The source $(1,1,0)$.
\item For all $1 \leq i < m$:
 \begin{itemize}
 \item The vertex $(i+1,i+1,i)$.
 \item For each $2 \leq u \leq m$ and each $1 \leq h \leq \min(i,u)$ the vertex $(h,u,i)$.
 \end{itemize}
\item The sink $(1,1,m)$.
\end{itemize}
 
\begin{lemma}
The number of vertices in $\G$ is $\frac {m^3} 3 - \frac m 3 + 2$.
There is only the source vertex in layer~0 and only the sink vertex in layer~$m$.
The number of vertices in layer $i \in \{1,\ldots,m-1\}$ is $i(i+1)/2+i(m-1)$.
\end{lemma}
 
\begin{proof}
By the above construction, the number of vertices in $\G$ equals
\[
2+\sum_{i=1}^{m-1}\Big(1+\sum_{u=2}^m \min(i,u)\Big) = 1+ m + \sum_{i=1}^{m-1}\sum_{u=2}^m \min(i,u).
\]
We see that $\sum_{i=1}^{m-1}\sum_{u=2}^m \min(i,u) = (m-2)(m-1)/2 + \sum_{i=1}^{m-1}\sum_{u=1}^{m-1} \min(i,u)$.
It is easy to see that
$\sum_{i=1}^{m-1}\sum_{u=1}^{m-1} \min(i,u)$ yields the square pyramidal numbers (OEIS\footnote{http://oeis.org/} A000330):
  $m(m-1)(m-\tfrac 1 2)/3$.
Therefore
\[
1+m + \sum_{i=1}^{m-1}\sum_{u=2}^m \min(i,u) = 1+m+ m(m-1)(m-\tfrac 1 2)/3 + (m-2)(m-1)/2 = \tfrac{m^3} 3 - \tfrac m 3 + 2.
\]
To analyze a single layer $1 \leq i \leq m-1$ we observe
\[
1+\sum_{u=2}^m \min(i,u) = \sum_{u=1}^m \min(i,u) = i(i+1)/2 + i(m-i).
\]
\end{proof}

We now describe the edges in $\G$.
The vertex $(h,u,i)$ is positioned in the $i$th layer with only edges to the layer $i+1$,
with the exception that layer $m-1$ has edges only to the sink.
>From $(h,u,i)$ we have the following outgoing edges.
\begin{itemize}
\item If $i+1<m$:
 \begin{itemize}
  \item for all $h+1\leq v \leq m$ an edge to $(h,v,i+1)$ labeled with $x^u_v$.
  \item for all $h+1\leq h' \leq m$ an edge to $(h',h',i+1)$ labeled with $-x^u_h$.
 \end{itemize}
\item If $i+1=m$: An edge to the sink labeled with $\alpha x^u_h$, where $\alpha=1$ if $m$ is odd and $\alpha=-1$ otherwise.
\end{itemize}

The fact that $\G$ actually computes $\tdet_m$ follows from \cite{MV:97}.
\end{proof}

As an illustration for $m=3,4,5$ we  include  the adjacency matrices of the $\G^{\text{loops}}$ that come out of the combination of
the constructions in Proposition~\ref{pro:MahVin} and Proposition~\ref{pro:constructadjmatrix}.

{
\tiny

\begin{verbatim}
   0    0    0    0  x21  x31  x22  x32  x33
 x12    1    0    0    0    0    0    0    0
 x13    0    1    0    0    0    0    0    0
-x11    0    0    1    0    0    0    0    0
   0  x22  x32    0    1    0    0    0    0
   0  x23  x33    0    0    1    0    0    0
   0 -x21 -x31    0    0    0    1    0    0
   0    0    0  x23    0    0    0    1    0
   0 -x21 -x31 -x22    0    0    0    0    1
\end{verbatim}

\begin{verbatim}
   0    0    0    0    0    0    0    0    0    0    0    0 -x21 -x31 -x41 -x22 -x32 -x42 -x33 -x43 -x44
 x12    1    0    0    0    0    0    0    0    0    0    0    0    0    0    0    0    0    0    0    0
 x13    0    1    0    0    0    0    0    0    0    0    0    0    0    0    0    0    0    0    0    0
 x14    0    0    1    0    0    0    0    0    0    0    0    0    0    0    0    0    0    0    0    0
-x11    0    0    0    1    0    0    0    0    0    0    0    0    0    0    0    0    0    0    0    0
   0  x22  x32  x42    0    1    0    0    0    0    0    0    0    0    0    0    0    0    0    0    0
   0  x23  x33  x43    0    0    1    0    0    0    0    0    0    0    0    0    0    0    0    0    0
   0  x24  x34  x44    0    0    0    1    0    0    0    0    0    0    0    0    0    0    0    0    0
   0 -x21 -x31 -x41    0    0    0    0    1    0    0    0    0    0    0    0    0    0    0    0    0
   0    0    0    0  x23    0    0    0    0    1    0    0    0    0    0    0    0    0    0    0    0
   0    0    0    0  x24    0    0    0    0    0    1    0    0    0    0    0    0    0    0    0    0
   0 -x21 -x31 -x41 -x22    0    0    0    0    0    0    1    0    0    0    0    0    0    0    0    0
   0    0    0    0    0  x22  x32  x42    0    0    0    0    1    0    0    0    0    0    0    0    0
   0    0    0    0    0  x23  x33  x43    0    0    0    0    0    1    0    0    0    0    0    0    0
   0    0    0    0    0  x24  x34  x44    0    0    0    0    0    0    1    0    0    0    0    0    0
   0    0    0    0    0 -x21 -x31 -x41    0    0    0    0    0    0    0    1    0    0    0    0    0
   0    0    0    0    0    0    0    0  x23  x33  x43    0    0    0    0    0    1    0    0    0    0
   0    0    0    0    0    0    0    0  x24  x34  x44    0    0    0    0    0    0    1    0    0    0
   0    0    0    0    0 -x21 -x31 -x41 -x22 -x32 -x42    0    0    0    0    0    0    0    1    0    0
   0    0    0    0    0    0    0    0    0    0    0  x34    0    0    0    0    0    0    0    1    0
   0    0    0    0    0 -x21 -x31 -x41 -x22 -x32 -x42 -x33    0    0    0    0    0    0    0    0    1
\end{verbatim}

\newsavebox\myv
\begin{lrbox}{\myv}\begin{minipage}{\textwidth}
\begin{verbatim}
   0    0    0    0    0    0    0    0    0    0    0    0    0    0    0    0    0    0    0    0    0    0    0    0    0    0    0  x21  x31  x41  x51  x22  x32  x42  x52  x33  x43  x53  x44  x54  x55
 x12    1    0    0    0    0    0    0    0    0    0    0    0    0    0    0    0    0    0    0    0    0    0    0    0    0    0    0    0    0    0    0    0    0    0    0    0    0    0    0    0
 x13    0    1    0    0    0    0    0    0    0    0    0    0    0    0    0    0    0    0    0    0    0    0    0    0    0    0    0    0    0    0    0    0    0    0    0    0    0    0    0    0
 x14    0    0    1    0    0    0    0    0    0    0    0    0    0    0    0    0    0    0    0    0    0    0    0    0    0    0    0    0    0    0    0    0    0    0    0    0    0    0    0    0
 x15    0    0    0    1    0    0    0    0    0    0    0    0    0    0    0    0    0    0    0    0    0    0    0    0    0    0    0    0    0    0    0    0    0    0    0    0    0    0    0    0
-x11    0    0    0    0    1    0    0    0    0    0    0    0    0    0    0    0    0    0    0    0    0    0    0    0    0    0    0    0    0    0    0    0    0    0    0    0    0    0    0    0
   0  x22  x32  x42  x52    0    1    0    0    0    0    0    0    0    0    0    0    0    0    0    0    0    0    0    0    0    0    0    0    0    0    0    0    0    0    0    0    0    0    0    0
   0  x23  x33  x43  x53    0    0    1    0    0    0    0    0    0    0    0    0    0    0    0    0    0    0    0    0    0    0    0    0    0    0    0    0    0    0    0    0    0    0    0    0
   0  x24  x34  x44  x54    0    0    0    1    0    0    0    0    0    0    0    0    0    0    0    0    0    0    0    0    0    0    0    0    0    0    0    0    0    0    0    0    0    0    0    0
   0  x25  x35  x45  x55    0    0    0    0    1    0    0    0    0    0    0    0    0    0    0    0    0    0    0    0    0    0    0    0    0    0    0    0    0    0    0    0    0    0    0    0
   0 -x21 -x31 -x41 -x51    0    0    0    0    0    1    0    0    0    0    0    0    0    0    0    0    0    0    0    0    0    0    0    0    0    0    0    0    0    0    0    0    0    0    0    0
   0    0    0    0    0  x23    0    0    0    0    0    1    0    0    0    0    0    0    0    0    0    0    0    0    0    0    0    0    0    0    0    0    0    0    0    0    0    0    0    0    0
   0    0    0    0    0  x24    0    0    0    0    0    0    1    0    0    0    0    0    0    0    0    0    0    0    0    0    0    0    0    0    0    0    0    0    0    0    0    0    0    0    0
   0    0    0    0    0  x25    0    0    0    0    0    0    0    1    0    0    0    0    0    0    0    0    0    0    0    0    0    0    0    0    0    0    0    0    0    0    0    0    0    0    0
   0 -x21 -x31 -x41 -x51 -x22    0    0    0    0    0    0    0    0    1    0    0    0    0    0    0    0    0    0    0    0    0    0    0    0    0    0    0    0    0    0    0    0    0    0    0
   0    0    0    0    0    0  x22  x32  x42  x52    0    0    0    0    0    1    0    0    0    0    0    0    0    0    0    0    0    0    0    0    0    0    0    0    0    0    0    0    0    0    0
   0    0    0    0    0    0  x23  x33  x43  x53    0    0    0    0    0    0    1    0    0    0    0    0    0    0    0    0    0    0    0    0    0    0    0    0    0    0    0    0    0    0    0
   0    0    0    0    0    0  x24  x34  x44  x54    0    0    0    0    0    0    0    1    0    0    0    0    0    0    0    0    0    0    0    0    0    0    0    0    0    0    0    0    0    0    0
   0    0    0    0    0    0  x25  x35  x45  x55    0    0    0    0    0    0    0    0    1    0    0    0    0    0    0    0    0    0    0    0    0    0    0    0    0    0    0    0    0    0    0
   0    0    0    0    0    0 -x21 -x31 -x41 -x51    0    0    0    0    0    0    0    0    0    1    0    0    0    0    0    0    0    0    0    0    0    0    0    0    0    0    0    0    0    0    0
   0    0    0    0    0    0    0    0    0    0  x23  x33  x43  x53    0    0    0    0    0    0    1    0    0    0    0    0    0    0    0    0    0    0    0    0    0    0    0    0    0    0    0
   0    0    0    0    0    0    0    0    0    0  x24  x34  x44  x54    0    0    0    0    0    0    0    1    0    0    0    0    0    0    0    0    0    0    0    0    0    0    0    0    0    0    0
   0    0    0    0    0    0    0    0    0    0  x25  x35  x45  x55    0    0    0    0    0    0    0    0    1    0    0    0    0    0    0    0    0    0    0    0    0    0    0    0    0    0    0
   0    0    0    0    0    0 -x21 -x31 -x41 -x51 -x22 -x32 -x42 -x52    0    0    0    0    0    0    0    0    0    1    0    0    0    0    0    0    0    0    0    0    0    0    0    0    0    0    0
   0    0    0    0    0    0    0    0    0    0    0    0    0    0  x34    0    0    0    0    0    0    0    0    0    1    0    0    0    0    0    0    0    0    0    0    0    0    0    0    0    0
   0    0    0    0    0    0    0    0    0    0    0    0    0    0  x35    0    0    0    0    0    0    0    0    0    0    1    0    0    0    0    0    0    0    0    0    0    0    0    0    0    0
   0    0    0    0    0    0 -x21 -x31 -x41 -x51 -x22 -x32 -x42 -x52 -x33    0    0    0    0    0    0    0    0    0    0    0    1    0    0    0    0    0    0    0    0    0    0    0    0    0    0
   0    0    0    0    0    0    0    0    0    0    0    0    0    0    0  x22  x32  x42  x52    0    0    0    0    0    0    0    0    1    0    0    0    0    0    0    0    0    0    0    0    0    0
   0    0    0    0    0    0    0    0    0    0    0    0    0    0    0  x23  x33  x43  x53    0    0    0    0    0    0    0    0    0    1    0    0    0    0    0    0    0    0    0    0    0    0
   0    0    0    0    0    0    0    0    0    0    0    0    0    0    0  x24  x34  x44  x54    0    0    0    0    0    0    0    0    0    0    1    0    0    0    0    0    0    0    0    0    0    0
   0    0    0    0    0    0    0    0    0    0    0    0    0    0    0  x25  x35  x45  x55    0    0    0    0    0    0    0    0    0    0    0    1    0    0    0    0    0    0    0    0    0    0
   0    0    0    0    0    0    0    0    0    0    0    0    0    0    0 -x21 -x31 -x41 -x51    0    0    0    0    0    0    0    0    0    0    0    0    1    0    0    0    0    0    0    0    0    0
   0    0    0    0    0    0    0    0    0    0    0    0    0    0    0    0    0    0    0  x23  x33  x43  x53    0    0    0    0    0    0    0    0    0    1    0    0    0    0    0    0    0    0
   0    0    0    0    0    0    0    0    0    0    0    0    0    0    0    0    0    0    0  x24  x34  x44  x54    0    0    0    0    0    0    0    0    0    0    1    0    0    0    0    0    0    0
   0    0    0    0    0    0    0    0    0    0    0    0    0    0    0    0    0    0    0  x25  x35  x45  x55    0    0    0    0    0    0    0    0    0    0    0    1    0    0    0    0    0    0
   0    0    0    0    0    0    0    0    0    0    0    0    0    0    0 -x21 -x31 -x41 -x51 -x22 -x32 -x42 -x52    0    0    0    0    0    0    0    0    0    0    0    0    1    0    0    0    0    0
   0    0    0    0    0    0    0    0    0    0    0    0    0    0    0    0    0    0    0    0    0    0    0  x34  x44  x54    0    0    0    0    0    0    0    0    0    0    1    0    0    0    0
   0    0    0    0    0    0    0    0    0    0    0    0    0    0    0    0    0    0    0    0    0    0    0  x35  x45  x55    0    0    0    0    0    0    0    0    0    0    0    1    0    0    0
   0    0    0    0    0    0    0    0    0    0    0    0    0    0    0 -x21 -x31 -x41 -x51 -x22 -x32 -x42 -x52 -x33 -x43 -x53    0    0    0    0    0    0    0    0    0    0    0    0    1    0    0
   0    0    0    0    0    0    0    0    0    0    0    0    0    0    0    0    0    0    0    0    0    0    0    0    0    0  x45    0    0    0    0    0    0    0    0    0    0    0    0    1    0
   0    0    0    0    0    0    0    0    0    0    0    0    0    0    0 -x21 -x31 -x41 -x51 -x22 -x32 -x42 -x52 -x33 -x43 -x53 -x44    0    0    0    0    0    0    0    0    0    0    0    0    0    1
\end{verbatim}
\end{minipage}\end{lrbox}
\hspace{-0.5cm}\resizebox{0.75\textwidth}{!}{\usebox\myv}

}

\medskip

See the ancillary files for larger values of $m$.
 
\section{Iterated matrix multiplication and ABP's}\label{immabpsect}

The following result, while \lq\lq known to the experts\rq\rq , is not easily accessible in the literature. Moreover,
we give a precise formulation to facilitate measuring benchmark progress in different models.

In the following theorem note that $\himmc$ and $\dlabpc$ are only defined for homogeneous polynomials.
\begin{theorem}\label{cmeasures}
The complexity measures $\rdc$, $\dc$, $\labpc$, $\immc$, $\abpc$, $\himmc$, and $\dlabpc$ are all polynomially related.
More precisely, let $P$ be any polynomial. Let $\varphi(m):=\frac{m^3}3-\frac m 3 +2$ denote the layered ABP size of the Mahajan-Vinay construction for $\det_m$. Then
\begin{enumerate}
\item \label{itemd} $\dc(P)\leq \labpc(P)-1$. If $P$ has no constant part, then $\rdc(P)\leq \labpc(P)-1$.
\item\label{itema} $\labpc(P) \leq \varphi(\dc(P))$.
\item\label{itemg}
 By definition $\dc(P) \leq \rdc(P)$.
If $P$ has no constant part, then
$\rdc(P) \leq \varphi(\dc(P))-1$. If $\tcodim(P_{sing})\geq 5$, then $\rdc(P)=\dc(P)$.
%\item \label{itema}   $\abpc(P)=O((\dc(P)^3)$. 
\item \label{itemb}   $\labpc(P)= \immc(P)+1$. If $P$ is homogeneous, then $\dlabpc(P)= \himmc(P)+1$.
\item \label{iteme} 
 By definition  $\abpc(P) \leq \labpc(P) \leq \dlabpc(P)$, where $\dlabpc(P)$ is defined only if $P$ is homogeneous.
If $P$ is homogeneous of degree $d$ then $\dlabpc(P) \leq (d+1)\abpc(P)$.
\end{enumerate}
\end{theorem}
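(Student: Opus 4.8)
My plan would be to prove the five items by combining the explicit constructions of \S\ref{abpsect} with the standard (folklore) translations between the models.

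\textbf{Items (1)--(3).} Item (1) is exactly Proposition~\ref{pro:constructadjmatrix}, so nothing new is needed. For item (2), I would start from a minimal determinantal expression $P=\det_n(\Lambda+\sum_j X^j y^j)$ with $n=\dc(P)$ and recall from Proposition~\ref{pro:MahVin} the layered ABP $\G$ of size $\varphi(n)$ computing $\det_n$ of the generic matrix $(x^u_v)$; substituting the affine linear form $\Lambda^u_v+\sum_j (X^j)^u_v\,y^j$ for each occurrence of $x^u_v$ in the edge labels of $\G$ yields a layered ABP of the same size for $P$, hence $\labpc(P)\le\varphi(\dc(P))$. Item (3) is then immediate: $\dc\le\rdc$ holds by definition; if $P$ has no constant term, the regular half of item (1) together with item (2) gives $\rdc(P)\le\labpc(P)-1\le\varphi(\dc(P))-1$; and $\rdc(P)=\dc(P)$ when $\tcodim(P_{sing})\ge 5$ is von zur Gathen's theorem recalled in \S\ref{defsect}.

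\textbf{Item (4).} Here I would write out the dictionary between (degree-)layered ABPs and iterated matrix multiplication. Given a layered ABP with layers $0,\dots,\ell$ of sizes $\nu_0=1,\nu_1,\dots,\nu_{\ell-1},\nu_\ell=1$, let $A_i$ collect the edge labels from layer $i$ to layer $i{+}1$, so that the output is the single entry of the product $A_0 A_1\cdots A_{\ell-1}$; identifying the source and the sink makes the adjacency matrix block-cyclic, its $\ell$-th power block-diagonal with the output sitting in the root block, and its trace equal to $\ell$ times the output, so after rescaling one block by $1/\ell$ and padding with identity matrices up to the uniform count and size demanded by the definition one obtains an $\immc$ presentation, giving $\immc(P)\le\labpc(P)-1$; the converse direction reads the matrix indices back as the vertices of consecutive layers. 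For the homogeneous refinement $\dlabpc(P)=\himmc(P)+1$ this is a clean bijection with ``layer $=$ degree,'' which exactly matches the normalization $n_1=1$, $n=n_1+\cdots+n_m$ in the definition of $\himmc$. I expect the one genuinely fussy point to be pinning down the precise additive constant in each direction --- i.e. which boundary vertex (source, sink, or their identification) and which identity blocks are or are not counted in ``size.''

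\textbf{Item (5).} The chain $\abpc(P)\le\labpc(P)\le\dlabpc(P)$ is immediate, since each model is a restriction of the previous one. For $\dlabpc(P)\le(d+1)\abpc(P)$ with $P$ homogeneous of degree $d$, I would take a minimal ABP $\Gamma$ for $P$ and homogenize it: replace each vertex $v$ by $d{+}1$ copies $(v,0),\dots,(v,d)$, split each edge label into its constant part $c$ and linear part $L$, and for each edge $u\to v$ add $(u,j)\to(v,j)$ labeled $c$ and $(u,j)\to(v,j{+}1)$ labeled $L$, with source $(s,0)$ and sink $(t,d)$. Then $(v,j)$ computes the degree-$j$ part of $\Gamma_v$, the sink computes $P$, and the vertex count is $(d+1)\abpc(P)$. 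It remains to turn this homogeneous ABP into a genuinely degree-layered one: the only offending edges are the constant-labeled ones internal to a single degree, and these can be eliminated without adding vertices by rerouting the paths through them --- processing the constant-edge sub-DAG in topological order so that each step replaces a constant edge by honest degree-raising edges and strictly decreases the number of constant edges. I expect this rerouting step, together with checking that it leaves the computed polynomial unchanged while not increasing the vertex count, to be the main technical obstacle in item (5).
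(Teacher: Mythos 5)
Your proposal is correct and follows essentially the same route as the paper: items (1)--(3) are assembled in the identical way from Proposition~\ref{pro:constructadjmatrix}, Proposition~\ref{pro:MahVin} and von zur Gathen's theorem, and item (5) uses the same vertex-splitting homogenization, with your topological-order rerouting of the within-layer constant edges being an iterative version of the paper's one-shot collapse onto ``entry vertices'' via the path-sum linear forms (both preserve the computed polynomial and never increase the vertex count). For item (4) your detour through $\mathrm{tr}(A^\ell)$ of the source--sink-identified adjacency matrix, with the $1/\ell$ rescaling, is an equivalent but more roundabout packaging of the paper's direct reading-off of the inter-layer blocks, and your converse direction relies on the same implicit rectangular block form ($n_1=1$, $n_1+\cdots+n_m=\immc(P)$) of an $\immc$ presentation that the paper's own proof of $\labpc(P)\leq\immc(P)+1$ uses, so no new gap is introduced.
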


\begin{remark} It is an important and perhaps tractable open problem to prove
an $\o(m^2)$ lower bound for $\dc(\tperm_m)$. By Theorem \ref{cmeasures}, it would
suffice to prove an $\o(m^6)$ lower bound for $\himmc(\tperm_m)$. 
\end{remark}
 
\begin{remark} The computation model of homogeneous iterated matrix multiplication
 has the advantage that one is comparing the homogeneous iterated matrix multiplication polynomial himm 
 directly with the permanent, whereas with the determinant $\tdet_n$, one must compare with the
 {\it padded permanent} $\ell^{n-m}\tperm_m$.
 The padding causes insurmountable problems if one wants to
 find {\it occurrence obstructions} in the sense of \cite{MS1,MS2}. The problem  was first observed
 in \cite{MR3169697} and then proved insurmountable in \cite{2015arXiv151203798I} and \cite{DBLP:journals/corr/BurgisserIP16}. Thus {\it a priori}
 it might be possible to prove Valiant's conjecture via occurrence obstructions in the himmc model. However,
 with the determinant already one needed to understand difficult properties about three factor Kronecker coefficients,
 and for the himmc model, one would need to prove results about $m$-factor Kronecker coefficients, which are
 not at all understood.
 
 Regarding the geometric search for separating equations,  the advantage one gains by removing the padding is
 offset by the disadvantage of dealing with the himmc polynomial that for all known equations such
 as Young flattenings (which includes the method of shifted partial derivatives as a special case)
 and equations for degenerate dual varieties,  behaves far more generically than the determinant.
 \end{remark}

\begin{remark} One can also show that if $P$ is any polynomial of degree $d$, then $\labpc(P)\leq d(\abpc(P)^2)$.
\end{remark}

\begin{remark} Another complexity measure is the {\it homogeneous matrix powering complexity}:
If $P=\tr(A^m)$, then $P=\tr(A\cdot A\cdot \cdots \cdot A)$, thus $\himmc(P) \leq m \cdot \hmpc(P)$.

Conversely, if $\himmc(P)=n$, then $\dlabpc(P)=n+1$, so there exists a degree layered APB $\Gamma$ of size $n+1$ with value $P$.
Since all paths in $\Gamma$ from the source to the sink have exactly length $m$ we can identify the source and the sink
and get a directed graph $\Gamma'$ in which all closed directed walks have length exactly $m$.
These closed walks are in bijection to paths from the source to the sink in $\Gamma$.
Let $A$ be the $n \times n$ adjacency matrix of $\Gamma'$. We can interpret $\tr(A^m)$ as the sum over all closed directed walks of length exactly $m$ in $\Gamma'$,
where the value of each walk is the product of its edge weights.
We conclude that $P=\tr(A^m)$ and thus $\hmpc(P) \leq \himmc(P)$.
\end{remark}

\begin{proof}[ {Proof of Theorem \ref{cmeasures}}]
\eqref{itemd} is Proposition~\ref{pro:constructadjmatrix}.

Proof of \eqref{itema}: We first write the determinant polynomial $\det_{\dc(P)}$ as a size $\varphi(\dc(P))$ layered ABP $\Gamma$ using \ref{pro:MahVin}.
The projection that maps $\det_{\dc(P)}$ to $P$ can now be applied to $\Gamma$ to yield a size $\varphi(\dc(P))$ layered ABP of $P$.

Proof of \eqref{itemg}: To see the second inequality we combine \eqref{itemd} and \eqref{itema}.
 The last assertion is von zur Gathen's result \cite{MR910987}. 
 
Proof of \eqref{itemb}:
We prove $\labpc(P) \leq \immc(P)+1$.
Given $n_1,\ldots,n_m$ with $n_1=1$ and $n_1+\cdots+n_m=\immc(P)$
and linear maps $B_j$, $1 \leq j \leq m$,
we construct the ABP $\G$ that has a single vertex at level $m+1$, $n_j$ vertices at level $j$, $1\leq j\leq m$,
and is the complete bipartite graph between levels. The labels of $\Gamma$ are given by the $B_j$.
We now prove $\immc(P) \leq \labpc(P)-1$.
Given a layered ABP $\G$ with $m+1$ layers,
recall that by definition  $\G$ has only 1 vertex in the top layer and only one vertex in the bottom layer.
 Let  $n_j$ denote  the number of vertices in layer $j$, $1 \leq j \leq m$.
 Define the linear maps $B_j$ by reading off the labels between layer $j$ and layer $j+1$.
 The  proof of the second claim   is    analogous.

Proof of \eqref{iteme}:  {(This argument was outlined in \cite{Nisan:1991:LBN:103418.103462}.)} We first homogenize and then adjust the ABP.
Replace each vertex $v$ other than $s$ by $d+1$ vertices $v^1,v^2,\ldots,v^{d+1}$ corresponding to the homogeneous parts of $\Gamma_v$.
Replace each edge $e$ going from a vertex $v$ to a vertex $w$ by $(2d+1)$ edges, where we split the linear and constant parts:
If $e$ is labeled by $\ell+\delta$, where $\ell$ is linear and $\delta\in\BC$,
the edge from $v^i$ to $w^i$, $1 \leq i \leq d$, is labeled with $\delta$ and the edge from $v^i$ to $w^{i+1}$, $1 \leq i \leq d-1$,
is labeled with $\ell$.
We now have a homogeneous ABP.
Our task is to make it degree layered.
As a first approach we assign each degree $i$ vertex to be in layer $i$,
but there may be edges labeled with constants between vertices in the same layer.
The edges between vertices of different layers are linear forms.
Call the vertices in layer $i$ that have edges incoming from layer $i-1$, {\it layer $i$ entry vertices}.
Remove the non-entry vertices.
>From entry vertex of layer $i$ to entry vertex of layer $i+1$, use the {\it linear form} computed by the sub-ABP between them.
In other words, for every pair $(v,w)$ of layer $i$ entry vertex $v$ and layer $i+1$ entry vertex $w$, put an edge from $v$
to $w$ with weight
$$
\sum_p \Pi_e \twt(e)
$$
where the sum is over paths $p$ from $v$ to $w$ and the product is over edges in the path $p$. 
The resulting ABP is degree homogeneous and computes $P$. 
\end{proof}

\section{Proofs of Theorems \ref{expthm} and \ref{nosqueezethm}}\label{expthmpfsect}

 {The following arguments appeared in \cite{Nisan:1991:LBN:103418.103462} in slightly different language.
We reproduce them in  the language of this paper for convenience.}

\begin{proof}[Proof of Theorem \ref{expthm}]
This can be seen directly from a consideration about evaluation dimension that we explain now.
We prove the stronger statement that the degree homogeneous ABP must have at least $\binom{m}{s}$ vertices at layer $s$, $0 \leq s \leq m$.
Summing up the binomial coefficients and using Theorem~\ref{cmeasures}\eqref{itemb} yields the result.

We consider the degree homogeneous ABP $\G$ with $m+1$ layers that computes $\det_m$ (or $\per_m$).
Keeping the labels from the source to layer $s$ and setting the labels on all other layers to constants we see that
all terms of the form
$\sum_{\s\in \FS_{m}} c_{\s} y^{1, \s(1)}\cdots y^{s,\s(s)}$
can be computed
by taking linear combinations of the polynomials $\G_v$, where $v$ is a vertex in layer $s$.
Since these terms span a vector space of dimension $\binom{m}{s}$
there must be at least $\binom{m}{s}$ linearly independent polynomials $\G_v$, so
there must be at least $\binom{m}{s}$ vertices on layer $s$.

The Grenet determinantal presentation of $\tperm_m$ \cite{Gre11} and the 
regular determinantal presentation of $\tdet_m$ of \cite{LRpermdet}
give rise to column-wise multilinear IMM presentations of size $2^m-1$.
\end{proof}

\begin{proof}[Proof of Theorem \ref{nosqueezethm}] 
 {The proof is essentially the same as the proof of Theorem
\ref{expthm}. Without loss of generality assume
it is the upper left  $L\times R$ sub-matrix appearing in  
the first $L$ terms. The terms of the form
$\sum_{\s\in \FS_{R} }c_{\s} y^{1}_{\s(1)}\cdots y^{L}_{\s(L)}$,
with the $c_{\s}$ nonzero constants, 
all appear in $\tdet_m$ and $\tperm_m$,  so they must appear independently
in the row vector $A_{L}\cdots A_1$. There are 
$\binom{R}{L}$ such terms so we conclude.}
\end{proof}

\section{Proof of Theorem \ref{rankone}}\label{rankksect}

\subsection{Regular determinantal expressions}\label{rdcsect}

 For $P\in S^m\BC^M$ define the {\it symmetry group of $P$}:
$$
G_P:=\{ g\in GL_M\mid P(g\cdot y)=P(y) \ \forall y\in \BC^M\}
$$

The group 
$G_{\tdet_n}$ essentially consists of multiplying an $n\times n$ matrix  $X$ on the left and right by matrices
of determinant one, and the transpose map, $X\mapsto X^T$.
Using $G_{\tdet_n}$,  without loss of
generality we may assume $\Lambda$ in a regular determinantal expression is the identity matrix
except with the $(1,1)$-entry set equal to zero. We call a regular determinantal representation
{\it standard} if $\Lambda$ is so normalized.

Let the upper indices stand for variable names (i.e.\ positions in a small $m \times m$ matrix)
and the lower indices stand for positions in a big $n \times n$ matrix.
If $A$ is an $n \times n$ matrix whose entries are affine linear forms in $m^2$ variables, then we write
\[
A = \Lambda + y^{1,1} X^{1,1} + y^{1,2} X^{1,2} + \cdots + y^{m,m} X^{m,m}
\]
with $m^2+1$ matrices $\Lambda, X^{1,1}, X^{1,2},\ldots,X^{m,m}$ of format  {$n \times n$}.

\begin{lemma}\label{lem:properties}
If $\det(A)\in\{\pm \det_m,\pm \per_m\}$ and $\Lambda$ is standard, then
\begin{enumerate}
\item[(I)] $A_{1,1}=0$,
\item[(II)] $\sum_{j=2}^n A_{1,j} A_{j,1} = 0$
\item[(III)] In the first column of $A$ there are  at least $m$ different entries. The same holds for the first row of $A$.
\end{enumerate}

\end{lemma}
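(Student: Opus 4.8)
The plan is to expand $\det(A)$ as a polynomial in the big-matrix entries and match low-degree behavior against $\det_m$ or $\per_m$, exploiting that $\Lambda$ is standard, i.e.\ $\Lambda = \mathrm{diag}(0,1,1,\ldots,1)$. Write $A = \Lambda + N$, where $N$ is the matrix of linear forms (no constant term), so $A_{1,1} = N_{1,1}$, $A_{i,i} = 1 + N_{i,i}$ for $i\geq 2$, and $A_{i,j} = N_{i,j}$ off the diagonal. The key tool is the cofactor/Laplace-type expansion: since $\Lambda$ has rank $n-1$, the constant term of $\det(A)$ is $\det(\Lambda) = 0$ (consistent with $\det_m,\per_m$ having no constant term), and the \emph{linear} term of $\det(A)$ in the big entries is $\sum_{i,j} (\mathrm{adj}\,\Lambda)_{j,i}\, N_{i,j}$. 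Because $\mathrm{adj}\,\Lambda$ has a single nonzero entry, namely $(\mathrm{adj}\,\Lambda)_{1,1} = \prod_{i\geq 2}1 = 1$, the linear term of $\det(A)$ equals $N_{1,1} = A_{1,1}$. But $\det_m$ and $\per_m$ have no linear term in the $y^{i,j}$, and $A_{1,1}$ is a linear form in the $y^{i,j}$; hence $A_{1,1} = 0$, giving (I).

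For (II), I would push the expansion to second order. With (I) in hand, $A$ has a zero in the $(1,1)$ slot, and I organize the Laplace expansion along the first row and first column. The degree-two part of $\det(A)$ in the big entries receives two kinds of contributions: (a) products $N_{i,j}N_{k,l}$ weighted by second cofactors of $\Lambda$, and (b) the piece coming from expanding $\det(\Lambda + N)$ where one picks the $(1,1)$-entry twice — but that slot is now identically zero. A cleaner way: use the formula $\det(\Lambda + N) = \sum_{S}\det(\Lambda[S^c])\,\det(N[S])$-type principal-ish expansion, or more directly note that the only way to avoid using the $(1,1)$ diagonal slot of $\Lambda$ and still get a nonzero term of low degree is to use the entries $A_{1,j}$ (for the first row) and $A_{j,1}$ (for the first column), compensated by the $1$'s on the remaining diagonal of $\Lambda$. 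This forces the quadratic term of $\det(A)$ restricted to "first-row times first-column" monomials to be exactly $\pm\sum_{j=2}^n A_{1,j}A_{j,1}$ (up to the global sign fixing $\det(A) = \pm\det_m$ or $\pm\per_m$). Since $\det_m$ and $\per_m$ have no monomials at all of the shape (entry)$\cdot$(entry) that could arise here before the $y^{i,j}$ are substituted — more precisely, the bilinear form $\sum_j A_{1,j}A_{j,1}$ in the $y$'s must vanish because there is no quadratic term in $\det_m$ or $\per_m$ supported the way the first-row/first-column interaction produces — we get $\sum_{j=2}^n A_{1,j}A_{j,1} = 0$ as an identity of quadratic forms in $y$, which is (II).

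For (III), I would argue by evaluation/restriction dimension. The degree-$m$ polynomial $\det(A)$ equals $\pm\det_m$ (or $\pm\per_m$), which genuinely depends on all $m^2$ variables, and in particular each column $y^{\cdot,k}$ of the small matrix really appears. Consider the contribution of the first row and first column of $A$ to a full generalized-diagonal expansion of $\det(A)$: any monomial of $\det_m$ or $\per_m$ is a product $y^{1,\sigma(1)}\cdots y^{m,\sigma(m)}$, and tracking which big-matrix entries can supply the variables $y^{1,1},\ldots,y^{1,m}$ (equivalently the variables from the first small-matrix row), one sees these can only enter through entries in the first column of $A$ (after the normalization, by a symmetry-group reduction as in \S\ref{rdcsect} one may arrange the "input" variables to sit there). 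If the first column of $A$ had at most $m-1$ distinct entries, the linear span of $\{A_{j,1}\}$ would have dimension $\leq m-1$, so only $\leq m-1$ independent linear forms in the $y$'s are available to cover the $m$ variables $y^{1,1},\ldots,y^{1,m}$, contradicting that $\det_m$ (resp.\ $\per_m$) involves all $m$ of them independently; hence at least $m$ distinct entries in the first column. The statement for the first row follows by the transpose symmetry $X \mapsto X^{T}$ in $G_{\det_n}$, which preserves standardness of $\Lambda$.

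The main obstacle I anticipate is item (II): carefully isolating the exact quadratic term of $\det(\Lambda+N)$ and showing the "first-row $\times$ first-column" part is precisely $\sum_{j\geq 2}A_{1,j}A_{j,1}$ with no interference from other quadratic contributions (those involving second cofactors of $\Lambda$ supported on the identity block). This is a bookkeeping argument about which subsets $S$ of indices give $\det(\Lambda[S^c]) \neq 0$, combined with the already-established vanishing $A_{1,1}=0$; it is routine but must be done with care so that the sign and the index range $j = 2,\ldots,n$ come out exactly right.
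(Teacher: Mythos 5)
Your treatment of (I) and (II) is correct and matches the underlying computation (which the paper simply cites from Alper--Bogart--Velasco): with $\Lambda=\mathrm{diag}(0,1,\dots,1)$ the only index sets $S$ with $\det(\Lambda[S^c])\neq 0$ are those containing $1$, so the linear part of $\det(A)$ is $A_{1,1}$ and, once $A_{1,1}=0$, the quadratic part is exactly $-\sum_{j\geq 2}A_{1,j}A_{j,1}$; the ``interference'' you worry about does not occur because any $2$-subset avoiding the index $1$ leaves the zero diagonal entry of $\Lambda$ in the complementary minor.

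The gap is in (III). Your argument rests on the claim that the variables $y^{1,1},\dots,y^{1,m}$ ``can only enter through entries in the first column of $A$,'' justified by an unspecified symmetry-group reduction. No such reduction exists: a general regular determinantal expression may place any small-matrix variable anywhere in $A$, and normalizing $\Lambda$ does not localize particular variables to the first column. Moreover, even granting that claim, ``only $m-1$ independent linear forms are available to cover $m$ variables'' is not a contradiction, since each monomial of $\det(A)$ draws only one factor from the first column and the remaining $m-1$ factors from elsewhere in $A$, where all variables are still available. The paper's proof is of a different nature: because the first column of $\Lambda$ is zero, the first column of $A$ consists entirely of linear forms, and Laplace expansion along it writes $P=\ell_1p_1+\cdots+\ell_kp_k$ with $\ell_1,\dots,\ell_k$ the distinct entries of that column and $\deg p_i=m-1$. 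If $k\leq m-1$, the common zero locus of $\ell_1,\dots,\ell_{m-1}$ is a linear subspace of dimension at least $m^2-(m-1)=m(m-1)+1$ contained in the hypersurface $\{P=0\}$, contradicting the theorems of Dieudonn\'e (for $\det_m$) and Chan--Ilten (for $\per_m$) that these hypersurfaces admit no linear subspace of that dimension. This external input on maximal linear subspaces is the essential ingredient your proposal is missing; without it the bound of $m$ distinct entries does not follow.
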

\begin{proof}
As observed in  \cite{2015arXiv150502205A}, (I) and (II) hold in any regular determinantal expression for a homogeneous
polynomial of degree $m\geq 2$ with standard $\Lambda$.
To prove (III),  
by \cite{MR0029360} (resp. \cite{MR3340547}) $\{\tdet_m=0\}\subset \BC^{m^2}$ (resp. $\{\tperm_m=0\}\subset \BC^{m^2}$)
does not admit a linear subspace of dimension $m(m-1)+1$. This implies that neither polynomial admits
an expression of the form $\ell_1p_1+\cdots +\ell_{m-1}p_{m-1}$ with $\ell_j$ linear and $p_j$ of degree $m-1$, as otherwise
the common zero set of $\ell_1\hd \ell_{m-1}$ would provide a linear space of dimension $m(m-1)+1$ on the hypersurface.
If we have a regular determinantal expression of $\tperm_m$ or $\tdet_m$, this implies that at least $m$ different
linear forms     appear  in the first column of $X$ and at least $m$ different linear forms    appear  in the first row of $X$.
\end{proof}

We are free to change our determinantal expression by elements of  the group $G_{\tdet_n,\Lambda}$ preserving
both $\tdet_n$ and $\Lambda$, which by \cite{LRpermdet} is, for $M\in \Mat_{n\times n}(\BC)$:
$$
\{M\mapsto
\begin{pmatrix} \l  & 0\\ v & g\end{pmatrix}
M \begin{pmatrix} 1 & w^T\\ 0& g\end{pmatrix}^{-1}
\mid g\in\GL_{n-1}, v \in \BC^{n-1}, w\in \BC^{n-1},\l \in \BC^*\}\cdot
\langle\transp\rangle,
$$
Where $\langle \transp\rangle\simeq \BZ_2$ is the group generated by transpose.

\subsection{Rank one  regular determinantal expressions}

Theorem \ref{rankone} will follow from Lemmas \ref{redlem} and \ref{threelem}.

\begin{lemma}\label{redlem}
Let $P_m\in S^m(Mat_{m\times m})$ be the permanent or determinant.
\begin{enumerate}
\item If $P_{m_0}$ does not admit a rank $k$ determinantal expression, then $P_{m}$ does not
admit a rank $k$ determinantal expression for all $m\geq m_0$.
\item If $P_{m_0}$ does not admit a rank $k$ \emph{regular} determinantal expression, then $P_{m}$ does not
admit a rank $k$ \emph{regular} determinantal expression for all $m\geq m_0$.
\end{enumerate}
\end{lemma}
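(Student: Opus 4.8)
\textbf{Proof plan for Lemma \ref{redlem}.}
The statement is a monotonicity (or "padding") result: a rank-$k$ determinantal expression for the larger polynomial $P_m$ can be specialized to produce one for $P_{m_0}$. So I would argue the contrapositive: assuming $P_m$ \emph{does} admit a rank-$k$ (resp. rank-$k$ regular) determinantal expression of some size $n$, I would exhibit a rank-$k$ (resp. rank-$k$ regular) determinantal expression for $P_{m_0}$. The key observation is that $P_{m_0}$ is obtained from $P_m$ by a simple substitution of variables: for the permanent, setting $y^{i,i} = 1$ for $m_0 < i \le m$ and $y^{i,j}=0$ for the remaining off-diagonal entries in those last rows/columns turns $\tperm_m$ into $\tperm_{m_0}$ (cofactor expansion along the restored identity block); the identical substitution works for $\tdet_m$. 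It suffices to treat $m = m_0+1$ and iterate.

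The plan is as follows. First I would fix a rank-$k$ expression $P_m(y) = \det(\Lambda + \sum_{i,j} y^{i,j} X^{i,j})$ with $\rank X^{i,j}\le k$ for all $i,j$. Then I would perform the specialization above: substitute the constants $y^{i,j} = \delta_{ij}$ for index pairs involving a row or column index exceeding $m_0$. Under this substitution the matrix becomes $\Lambda' + \sum_{1\le i,j\le m_0} y^{i,j} X^{i,j}$, where $\Lambda' = \Lambda + \sum_{m_0 < i \le m} X^{i,i}$ is a new constant matrix, and the determinant of this pencil is exactly $\pm\tperm_{m_0}$ (resp. $\pm\tdet_{m_0}$) since the substitution sends $P_m$ to $P_{m_0}$. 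The matrices $X^{i,j}$ for $1 \le i,j \le m_0$ are unchanged, so they still have rank at most $k$; hence this is a rank-$k$ determinantal expression for $P_{m_0}$ of the same size $n$. This proves part (1).

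For part (2), the only extra point is that the new constant term $\Lambda'$ may fail to be regular, i.e. $\rank \Lambda'$ need not be $n-1$ even though $\rank \Lambda = n-1$. The clean way around this is to invoke von zur Gathen's theorem (quoted earlier in the excerpt, see the discussion before Proposition \ref{rdcdetm} and Theorem \ref{cmeasures}\eqref{itemg}): since the singular locus of $\tperm_{m_0}$ and of $\tdet_{m_0}$ has codimension at least five once $m_0 \ge 3$ (and the relevant cases here have $m_0 \ge 3$), \emph{every} determinantal expression for $P_{m_0}$ is automatically regular. Thus the rank-$k$ expression produced in part (1) is already regular, and part (2) follows at once. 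Alternatively, without appealing to von zur Gathen, one can restore regularity by hand: perturb $\Lambda'$ within its $GL_n \times GL_n$ orbit, or absorb the correction using the group $G_{\det_n,\Lambda}$ described in \S\ref{rdcsect}, at the cost of a short bookkeeping argument; but the von zur Gathen route is cleaner.

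\textbf{Main obstacle.} The substitution and the rank-preservation are routine; the only genuinely delicate point is the regular case, where one must ensure the specialized constant matrix is (or can be made) of corank exactly one. I expect this to be a non-issue in the range of interest because von zur Gathen's regularity theorem applies to $\tperm_{m_0}$ and $\tdet_{m_0}$ for $m_0\ge 3$, but if one wanted the lemma uniformly for all $m_0$ one would need the explicit corank-restoration argument, which is the part I would write most carefully.
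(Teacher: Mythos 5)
Your part (1) is essentially the paper's argument: restrict to $m=m_0+1$, kill the variables in the last row and column except $y^{m,m}$, and observe that the remaining $X^{u,v}$ are untouched so the rank bound survives. That part is fine.

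Part (2) has a genuine gap. Your ``clean'' route invokes von zur Gathen's theorem, but that theorem requires the singular locus of the hypersurface to have codimension at least five, and this \emph{fails for the determinant}: $\tsingloc(\{\tdet_{m_0}=0\})$ is the locus of matrices of rank at most $m_0-2$, which has codimension $4$ in $\Mat_{m_0\times m_0}$. This is exactly why the paper treats $\rdc(\tdet_m)$ as a separate quantity from $\dc(\tdet_m)$ (Proposition \ref{rdcdetm} and Theorem \ref{cmeasures}\eqref{itemg} are only conditional on $\tcodim(P_{sing})\geq 5$). So for $P=\tdet$ you cannot conclude that the specialized expression is automatically regular, and the lemma is needed for both the permanent and the determinant. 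Your fallback sketch also does not work as stated: conjugating $\Lambda'$ by $GL_n\times GL_n$ preserves its rank, so it cannot ``restore'' corank one if the specialization $\Lambda'=\Lambda+X^{m,m}$ happens to drop rank. The correct fix, which is what the paper does, is a one-line genericity argument: since $\rank(\Lambda+ t\, X^{m,m})\geq n-1$ holds at $t=0$ and is an open condition in $t$, it holds for almost all $t$, so one may choose a nonzero $t_0$ with $\rank(\Lambda+t_0X^{m,m})= n-1$ (the reverse inequality $\leq n-1$ is automatic because $P_{m_0}$ has no constant term); the resulting expression computes $t_0\cdot P_{m_0}$, and rescaling one row of the whole pencil by $t_0^{-1}$ yields a regular rank-$k$ expression for $P_{m_0}$. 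You should replace the von zur Gathen appeal by this substitution-at-a-generic-value argument.
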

\begin{proof}
 Without loss of generality  $m=m_0+1$.
Say $P_{m}$ admitted a rank $k$ $n \times n$ determinantal expression $A=\Lambda+\sum_{i,j=1}^{m} X^{i,j}y^{i,j}$.
 Set $y^{m,u}=y^{v,m}=0$ for $1\leq u,v,\leq m_0=m-1$.
We obtain  the matrix
$\Lambda+  X^{m,m}y^{m,m} + \sum_{u,v=1}^{m_0} X^{u,v}y^{u,v}$.
This yields a rank $k$ determinantal expression for $y^{m,m} \cdot P_{m_0}$, which proves the first part if we set $y^{m,m} = 1$. 

For the second part, first note that every determinantal expression 
$P_{m_0} = \det(\Lambda'+\sum_{u,v=1}^{m_0} X^{u,v}y^{u,v})$
satisfies   $\trank \Lambda' \leq n-1$  because $P_{m_0}$ has no constant part.
Thus to prove that a determinantal expression for $P_{m_0}$ is regular it suffices to show that $\trank \Lambda' \geq n-1$.

Say $P_{m}$ admitted a rank $k$ $n \times n$ regular determinantal expression $A=\Lambda+\sum_{i,j=1}^{m} X^{i,j}y^{i,j}$,
so $\trank \Lambda = n-1$.
Then  
$\trank(\Lambda+y^{m,m}_0 X^{m,m}) \geq n-1$ for   almost all $y^{m,m}_0 \in \BC$.
Choosing such a  $y^{m,m}_0 \neq 0$  we obtain a regular determinantal expression for $y^{m,m}_0 \cdot P_{m_0}$.
Rescaling the first rows of $\Lambda$ and all $X^{i,j}$ with $\frac 1 {y^{m,m}_0}$ we get a regular determinantal expression for $P_{m_0}$.
\end{proof}

\medskip

\begin{lemma} \label{threelem} Neither $\tdet_3$ nor $\tperm_3$ admits a rank one regular determinantal representation.
\end{lemma}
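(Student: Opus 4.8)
The plan is to reduce the problem to a finite, highly constrained case analysis for $n\times n$ expressions with $n\leq 7$ and then rule each out by hand. First I would invoke Lemma~\ref{lem:properties}: in a standard regular expression $A=\Lambda+\sum X^{i,j}y^{i,j}$ for $\pm\tdet_3$ or $\pm\tperm_3$ we have $A_{1,1}=0$, $\sum_{j\geq 2}A_{1,j}A_{j,1}=0$, and the first row and first column each contain at least $m=3$ distinct affine-linear entries (in fact, since there is no constant term, at least $3$ distinct nonzero linear forms). The rank-one hypothesis says each $X^{i,j}=u^{i,j}(w^{i,j})^T$ is a rank $\leq 1$ matrix; combined with $\Lambda$ being the identity with a zeroed $(1,1)$-entry, this forces $A$ to have a very rigid sparsity pattern. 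I would also use the normalization freedom under the stabilizer group $G_{\tdet_n,\Lambda}$ displayed above (lower-triangular-type left action and its right counterpart, plus transpose) to put $A$ into a canonical form.

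Next I would exploit the determinantal identity itself. Expanding $\det(A)$ along the first row and first column and using $A_{1,1}=0$, the degree-$3$ part must equal $\pm\tperm_3$ (resp.\ $\pm\tdet_3$), while all lower-degree parts vanish identically. Because $\Lambda$ is (after normalization) the identity off the $(1,1)$ slot, $\det(A)$ is, up to sign, the sum over all ways of choosing a set $S$ of rows/columns avoiding the diagonal $1$'s, of $\det$ of the submatrix of $\sum X^{i,j}y^{i,j}$ on $S$ together with the forced $(1,1)$ position; since each $X^{i,j}$ has rank $1$, any such subdeterminant using two columns coming from the same variable vanishes, so only ``multilinear in the variables'' contributions survive, and in particular the whole expression behaves like a read-once/rank-one IMM. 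This is essentially the mechanism behind Theorem~\ref{nosqueezethm} and the evaluation-dimension argument of Theorem~\ref{expthm}: I would argue that a rank-one regular expression of size $n$ yields, via the standard ABP$\leftrightarrow$determinant dictionary of \S\ref{abpsect}, a degree-homogeneous ABP computing $\tperm_3$ (or $\tdet_3$) whose edge labels are (up to the rank-one structure) single variables, and then a direct count shows $n-1\geq \binom{3}{1}+\binom{3}{2}=6$, i.e.\ $n\geq 7$, with equality only for the Grenet/\cite{LRpermdet} expressions — but those are \emph{not} rank one, which gives the contradiction. The alternative, more hands-on route is to simply set $n=7$ (the minimum possible size, since $\dc(\tperm_3)=\dc(\tdet_3)\geq 7$ is already known, and $7$ is achieved), write down the general standard regular rank-one $7\times 7$ matrix subject to (I)--(III), and show the coefficient equations for $\tperm_3$ (nine degree-three monomials, each with coefficient $1$) together with the vanishing of all degree $\leq 2$ terms are inconsistent.

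The key steps in order: (1) normalize $\Lambda$ to standard form and apply the stabilizer group to simplify the positions of the rank-one blocks $X^{i,j}$; (2) use Lemma~\ref{lem:properties}(III) to show the first row and column are forced to look (up to the group action) like $(0,\ell_1,\ell_2,\ell_3,*,\dots)$ with the $\ell_k$ independent linear forms, and similarly down the first column, which already pins down several $X^{i,j}$; (3) push the degree-$0$ and degree-$1$ and degree-$2$ vanishing conditions through the cofactor expansion to constrain the remaining blocks, using that products of two rank-one blocks sharing a variable drop rank; (4) either derive the clean bound $n\geq 7$ with the equality case being the known (non-rank-one) expressions, or, if one prefers, finish the $n=7$ case by an explicit contradiction among the monomial-coefficient equations. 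I expect step (3)--(4) to be the main obstacle: controlling the lower-degree cancellations in the cofactor expansion of a $7\times 7$ (or larger) matrix is where the bookkeeping is heaviest, and the cleanest way to handle it is to phrase everything in the ABP language so that ``rank one'' becomes ``each edge is essentially a single variable'' and the obstruction reduces to the evaluation-dimension count already used for Theorem~\ref{expthm}, after which the non-rank-one nature of Grenet's expression (its matrices reference several variables and are not rank one) closes the argument.
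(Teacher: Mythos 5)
Your steps (1)--(2) --- normalizing $\Lambda$ to standard form, invoking Lemma~\ref{lem:properties}(I)--(III), and using the stabilizer $G_{\tdet_n,\Lambda}$ to rigidify the rank-one blocks --- match the paper's setup. But steps (3)--(4) contain a genuine gap. You cannot reduce to $n\leq 7$, and deriving ``$n\geq 7$'' would not be a contradiction: a rank-one regular expression could a priori have any size $n$, and size-$7$ (indeed arbitrarily large) determinantal expressions of $\tperm_3$ certainly exist, so a size lower bound rules out nothing. Your claim that equality $n=7$ forces the Grenet/\cite{LRpermdet} expressions is unsupported --- Theorem~\ref{expthm} characterizes optimal \emph{column-wise multilinear IMM presentations}, not all size-$7$ determinantal expressions --- and even if it held you would still have to exclude every $n\geq 8$. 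A second symptom of the same problem: the trivial expression $\tdet_3=\det(\sum y^{i,j}E_{i,j})$ \emph{is} rank one (each $E_{i,j}$ has rank one); it is excluded only because $\Lambda=0$ is not regular. So any correct argument must use regularity structurally at every size, which your evaluation-dimension count does not do. Moreover, the dictionary you lean on runs the wrong way: Proposition~\ref{pro:constructadjmatrix} converts layered ABPs \emph{into} regular determinantal expressions; there is no converse translation turning a rank-one regular determinantal expression into a degree-homogeneous ABP with single-variable edge labels, and the $\binom{m}{s}$ bound in the proof of Theorem~\ref{expthm} is proved only for that different restricted model.

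The paper's actual proof is a direct structural analysis valid for arbitrary $n$. Its engine is Lemma~\ref{lem:thirdposition}: since $\Lambda$ is standard, every monomial $y^{i_1,j_1}y^{i_2,j_2}y^{i_3,j_3}$ surviving in $\det(A)$ must place one of its three variables in column $1$ (row $k$), one in row $1$ (column $\ell$), and the third exactly at the ``closing'' slot $(\ell,k)$ with $k\neq\ell$. Combining this with the rank-one hypothesis (each $X^{i,j}$ supported on a single row or column after normalization) and repeated use of permutation and elimination conjugations, the paper pins down the supports of $X^{1,1}$, $X^{3,3}$, $X^{2,3}$, etc.\ one at a time and, in every branch of the case tree, either forces a forbidden monomial such as $y^{3,1}y^{1,2}y^{3,3}$ to appear in $\det(A)$ or contradicts an established vanishing like (10e). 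If you want to salvage your outline, you should replace the ABP/counting step entirely with an argument of this kind that tracks monomial supports for a general $n$.
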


The idea of the proof is simple: each monomial in the expression of $\tperm_3$ (or $\tdet_3$)
 must have a contribution from the first column and the first row, say slots
 $(s,1)$ and $(1,t)$. But then to have a homogeneous degree three expression, the third variable in the monomial
 must appear in the $(t,s)$-slot. This is sufficiently restrictive that one can conclude. Now
 for the details:

 \subsection{Proof of Lemma \ref{threelem}}
 
 Before proving the Lemma, we establish some preliminary results.
 
\newcommand{\fref}[1]{\ref{#1}}

\newcommand{\partinto}[1][]{\smash{\mathord{\mathchoice{%
  \xymatrix@=0.4em@1{%
  \ar@{|-}[rr]_-*--{\scriptstyle #1}
  &*{\phantom{\scriptstyle{#1}}}&}
}{
  \xymatrix@=0.25em@1{%
  \ar@{|-}[rr]_-*--{\scriptstyle #1}
  &*{\phantom{\scriptstyle{#1}}}&}
}{
  \xymatrix@=0.2em@1{%
  \ar@{|-}[rr]_-*--{\scriptscriptstyle #1}
  &*{\phantom{\scriptscriptstyle{#1}}}&}
}{}}}}
\newcommand{\partintonosmash}[1][]{\mathord{\mathchoice{%
  \xymatrix@=0.4em@1{%
  \ar@{|-}[rr]_-*--{\scriptstyle #1}
  &*{\phantom{\scriptstyle{#1}}}&}
}{
  \xymatrix@=0.25em@1{%
  \ar@{|-}[rr]_-*--{\scriptstyle #1}
  &*{\phantom{\scriptstyle{#1}}}&}
}{
  \xymatrix@=0.2em@1{%
  \ar@{|-}[rr]_-*--{\scriptscriptstyle #1}
  &*{\phantom{\scriptscriptstyle{#1}}}&}
}{}}}
\newcommand{\partintostar}[1][]{\smash{\mathord{\mathchoice{%
  \xymatrix@=0.4em@1{%
  \ar@{|-}[rr]_-*--{\scriptstyle #1}^-*--{\scriptstyle \ast}
  &*{\phantom{\scriptstyle{#1}}}&}
}{
  \xymatrix@=0.25em@1{%
  \ar@{|-}[rr]_-*--{\scriptstyle #1}^-*--{\scriptstyle \ast}
  &*{\phantom{\scriptstyle{#1}}}&}
}{
  \xymatrix@=0.2em@1{%
  \ar@{|-}[rr]_-*--{\scriptscriptstyle #1}^-*--{\scriptstyle \ast}
  &*{\phantom{\scriptscriptstyle{#1}}}&}
}{}}}}
\newcommand{\partintostarnosmash}[1][]{\mathord{\mathchoice{%
  \xymatrix@=0.4em@1{%
  \ar@{|-}[rr]_-*--{\scriptstyle #1}^-*--{\scriptstyle \ast}
  &*{\phantom{\scriptstyle{#1}}}&}
}{
  \xymatrix@=0.25em@1{%
  \ar@{|-}[rr]_-*--{\scriptstyle #1}^-*--{\scriptstyle \ast}
  &*{\phantom{\scriptstyle{#1}}}&}
}{
  \xymatrix@=0.2em@1{%
  \ar@{|-}[rr]_-*--{\scriptscriptstyle #1}^-*--{\scriptstyle \ast}
  &*{\phantom{\scriptscriptstyle{#1}}}&}
}{}}}

\newcommand{\kron}[3]{k({#1};{#2};{#3})}

 \newcommand{\IC}{\ensuremath{\mathbb{C}}}
\newcommand{\IA}{\ensuremath{\mathbb{A}}}
\newcommand{\IN}{\ensuremath{\mathbb{N}}}
\newcommand{\IZ}{\ensuremath{\mathbb{Z}}}
\newcommand{\IQ}{\ensuremath{\mathbb{Q}}}
\newcommand{\IR}{\ensuremath{\mathbb{R}}}
\newcommand{\IIF}{\ensuremath{\mathbb{F}}}
\newcommand{\IP}{\ensuremath{\mathbb{P}}}
\newcommand{\aS}{\ensuremath{\mathfrak{S}}}
\newcommand{\aG}{\ensuremath{\mathfrak{G}}}
\newcommand{\aE}{\ensuremath{\mathsf{E}}}
\newcommand{\sS}{\ensuremath{\mathscr{S}}}
\newcommand{\sF}{\ensuremath{\mathscr{F}}}
\newcommand{\sV}{\ensuremath{\mathscr{V}}}
\newcommand{\sW}{\ensuremath{\mathscr{W}}}

\newcommand\Zehn{10}
\newcommand\Elf{11}
\newcommand\Zwoelf{12}

\newcommand\rk{\textup{rk}}

\raggedbottom
 
\sloppy

\begin{lemma}\label{lem:thirdposition}
Let $\det(A)\in\{\pm \det_3,\pm \per_3\}$ and let $\Lambda$ be standard.
Let $1 \leq i_1,j_1,i_2,j_2,i_3,j_3 \leq 3$.
If the monomial $y^{i_1,j_1}\cdot y^{i_2,j_2}\cdot y^{i_3,j_3}$ appears in $\det(A)$, then there exists a permutation $\pi \in \FS_3$
and integers $2 \leq k,\ell\leq n$, $k \neq \ell$ such that
$X^{i_{\pi(1)},j_{\pi(1)}}_{k,1} \neq 0$,
$X^{i_{\pi(2)},j_{\pi(2)}}_{1,\ell} \neq 0$, and
$X^{i_{\pi(3)},j_{\pi(3)}}_{\ell,k} \neq 0$.
\end{lemma}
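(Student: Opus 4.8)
The plan is to compute $\det(A)$ by Schur complement against the invertible $(n-1)\times(n-1)$ block and read the required nonvanishings off the result. Since $\Lambda$ is standard and $A_{1,1}=0$ by Lemma~\ref{lem:properties}(I), write
\[
A=\begin{pmatrix} 0 & r^{\transpose}\\ c & I_{n-1}+L'\end{pmatrix},
\]
where $r,c\in\BC^{n-1}$ are vectors of linear forms (the first row and first column of $A$ off the $(1,1)$ slot) and $L'$ is an $(n-1)\times(n-1)$ matrix of linear forms (the lower-right block of $A$ minus $I_{n-1}$). Then $\det(A)=-\,r^{\transpose}\operatorname{adj}(I_{n-1}+L')\,c$ as an identity of polynomials. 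Expanding $\operatorname{adj}(I_{n-1}+L')=I_{n-1}+(\trace(L')\,I_{n-1}-L')+\cdots$ in degrees and using that $\det(A)$ is homogeneous of degree $3$: its degree-$2$ part $-r^{\transpose}c$ must vanish, which is exactly $\sum_{j=2}^{n}A_{1,j}A_{j,1}=0$ (Lemma~\ref{lem:properties}(II)), and its degree-$3$ part then gives
\[
\det(A)=r^{\transpose}L'c=\sum_{j,k=2}^{n}A_{1,j}\,L'_{j,k}\,A_{k,1},
\]
where $L'_{j,k}=A_{j,k}$ for $j\neq k$ and $L'_{j,j}$ is the linear part of $A_{j,j}$.

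Now suppose the monomial $y^{i_1,j_1}y^{i_2,j_2}y^{i_3,j_3}$ of $\det_3$ or $\per_3$ occurs in $\det(A)$; its three variables are pairwise distinct (their first indices are distinct), so it occurs with nonzero coefficient in at least one of the products $A_{1,j}L'_{j,k}A_{k,1}$ above. Suppose first that this happens with $j\neq k$, so the product is $A_{1,j}A_{j,k}A_{k,1}$, a product of three linear forms. A monomial with three distinct variables arises in such a product only by taking one variable from each factor; since its coefficient is nonzero, some assignment of the three variables $y^{i_a,j_a}$ to the three factors makes all three coefficients nonzero, i.e.\ there is $\pi\in\FS_3$ with $y^{i_{\pi(1)},j_{\pi(1)}}$ appearing in $A_{k,1}$, $y^{i_{\pi(2)},j_{\pi(2)}}$ in $A_{1,j}$, and $y^{i_{\pi(3)},j_{\pi(3)}}$ in $A_{j,k}$, that is,
\[
X^{i_{\pi(1)},j_{\pi(1)}}_{k,1}\neq0,\qquad X^{i_{\pi(2)},j_{\pi(2)}}_{1,j}\neq0,\qquad X^{i_{\pi(3)},j_{\pi(3)}}_{j,k}\neq0.
\]
Taking $\ell:=j$ (note $2\le j,k\le n$ and $j\neq k$) this is exactly the conclusion of the lemma.

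The remaining case — and the main obstacle — is that the monomial occurs only in ``diagonal'' products $A_{1,j}\,L'_{j,j}\,A_{j,1}$, forcing one of its variables into a diagonal entry $A_{j,j}$ of the big matrix, where the conclusion cannot be read off. To exclude this I would first use Lemma~\ref{lem:properties}(II) once more to simplify the total diagonal contribution: since $\bigl(\sum_{j\ge2}A_{1,j}A_{j,1}\bigr)\bigl(\sum_{j\ge2}L'_{j,j}\bigr)=0$, it equals $-\sum_{j\ge2}A_{1,j}A_{j,1}L'_{j,j}$; and then I would argue that a monomial of $\det_3$ or $\per_3$ cannot be carried solely by $\sum_{j\ge2}A_{1,j}A_{j,1}L'_{j,j}$. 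The natural tools for that last step are the vanishing of the higher-degree homogeneous components of $\det(\Lambda+L)$ — for instance the quartic identity $r^{\transpose}L'^{2}c=\trace(L')\cdot r^{\transpose}L'c$ forced by homogeneity — or a reduction showing that, after acting by the stabilizer $G_{\tdet_n,\Lambda}$, one may assume every $A_{j,j}$ with $j\ge2$ is a constant, which removes the diagonal products outright. I expect this diagonal case to be the technical heart of the proof; once it is dispatched, the monomial is forced into a $3$-cycle product $A_{1,j}A_{j,k}A_{k,1}$ with $j\neq k$ and the previous paragraph finishes the argument.
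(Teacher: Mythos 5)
Your treatment of the off-diagonal case is correct and is essentially the paper's argument recast: the paper expands the specialized matrix $B$ by Laplace along the first row and then the first column, which amounts to extracting the coefficient of $\mu=y^{i_1,j_1}y^{i_2,j_2}y^{i_3,j_3}$ from $\sum_{j,k\geq 2}A_{1,j}L'_{j,k}A_{k,1}$; your Schur-complement derivation of $\det(A)=r^{\transpose}L'c$ is a clean way to set that up, and the step from a nonzero coefficient in a product of three linear forms to a permutation $\pi$ with all three entries nonzero is fine. But the lemma is not proved: the case in which $\mu$ is carried only by the diagonal summands $A_{1,j}(A_{j,j}-1)A_{j,1}$ --- precisely the configuration $k=\ell$ that the statement must exclude --- is left as a declared intention (``I would argue\ldots'', ``I expect this to be the technical heart''). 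This case is genuinely possible a priori, and the paper rules it out inside its minor expansion by invoking property (II) of Lemma~\ref{lem:properties} (the sentence ``Assume for a moment that $k=\ell$\ldots This is impossible due to Lemma~\ref{lem:properties}(II)'').

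Neither of the tools you propose for the diagonal case is shown to work, and the first is mis-stated: from $\bigl(\sum_{j}A_{1,j}A_{j,1}\bigr)\bigl(\sum_{k}L'_{k,k}\bigr)=0$ you write that the diagonal contribution ``equals $-\sum_{j}A_{1,j}A_{j,1}L'_{j,j}$,'' i.e.\ equals its own negative; if that were correct the diagonal contribution would vanish identically and you would be done, but the actual consequence is only $\sum_{j}A_{1,j}A_{j,1}L'_{j,j}=-\sum_{j\neq k}A_{1,j}A_{j,1}L'_{k,k}$, which does not visibly vanish. The second tool (using $G_{\tdet_n,\Lambda}$ to make every $A_{j,j}$, $j\geq 2$, constant) is a nontrivial claim about simultaneously killing the diagonals of nine coefficient matrices under conjugation by $\GL_{n-1}$, and no argument is given. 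Until the diagonal case is actually dispatched --- for instance by an argument in the spirit of the paper's use of (II), showing that a monomial of $\det_3$ or $\per_3$ surviving only in $\sum_{j}A_{1,j}A_{j,1}L'_{j,j}$ is incompatible with $\sum_{j}A_{1,j}A_{j,1}=0$ --- the proof has a genuine gap at its decisive step.
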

\begin{proof}
By Lemma~\ref{lem:properties}(I) we have $A_{1,1}=0$.
For subsets $L,K\subseteq \{1,\ldots,n\}$
let $A(L,K)$ denote the matrix that results from $A$ by striking out the rows $L$ and the columns $K$.
In $A$ set all variables to zero besides $y^{i_1,j_1}$,$ y^{i_2,j_2}$, and $y^{i_3,j_3}$ and call the resulting matrix $B$.
Since $\det(A)$ is homogeneous of degree 3, every other monomial in $\det(A)$ involves one of the variables that were set to zero.
 Hence    $\det(B)=y^{i_1,j_1}\cdot y^{i_2,j_2}\cdot y^{i_3,j_3}$.
In particular $\det(B)\neq 0$.
Since $\Lambda$ has only zeros in the first row,
we conclude that there exists a nonzero variable entry in the first row of $B$ (in column $2,\ldots,n$),
w.l.o.g.\ $X^{i_{2},j_{2}}_{1,\ell} \neq 0$,
whose minor $\det(B(\{1\},\{\ell\}))$ contains the summand $y^{i_1,j_1}y^{i_3,j_3}$.
Since $B(\{1\},\{\ell\})$ has no constant terms in the first column,
a variable $y^{i_1,j_1}$ or $y^{i_3,j_3}$ must appear in the first column of $B(\{1\},\{\ell\})$, w.l.o.g.\ $X^{i_{1},j_{1}}_{k,1} \neq 0$,
such that its minor $\det(B(\{1,k\},\{\ell,1\}))$ contains the summand $y^{i_3,j_3}$.

 Assume  for a moment that $k = \ell$, i.e., in the first column no other position has a $y^{i_1,j_1}$
and in the first row no other position has a $y^{i_2,j_2}$.
This is impossible due to Lemma~\ref{lem:properties}(II).

Finally assume $k \neq \ell$.
Since the constant part of $B(\{1,k\},\{\ell,1\})$ is a permutation matrix with a single hole,
this hole is where $B(\{1,k\},\{\ell,1\})$ must have a nonzero entry $y^{i_3,j_3}$.
In $A$ this is at position $(\ell,k)$. 
\end{proof}

We now give names to some standard operations on matrices that we will use in the upcoming arguments.
 We continue to assume $\Lambda$ is standard. 
\begin{itemize}
\item Adding/subtracting a multiple of the first column of $A$ to other columns of $A$ is called a \emph{first column operation}.
Analogously for \emph{first row operations}.
First row or first column operations belong to $G_{\tdet_n,\Lambda}$. 
\item If we add/subtract multiples of other rows/columns from each other we call this a \emph{Gauss-Jordan operation}.
 Gauss-Jordan operations belong to $G_{\tdet_n}$ but not $G_{\tdet_n,\Lambda}$. 
 \item Let $2 \leq i,j \leq n$.
Permuting rows $i$ and $j$ and then permuting columns $i$ and $j$ is called a \emph{permutation conjugation}.
 Permutation conjugations belong to $G_{\tdet_n,\Lambda}$. 
 \item Let $2 \leq i,j \leq n$.
For $\alpha \in \IC$,
adding $\alpha$ times the $i$th row to the $j$th row of $A$
and then subtracting $\alpha$ times the $j$th column from the $i$th column of $A$
is called a \emph{elimination conjugation}.
 Elimination conjugations belong to $G_{\tdet_n,\Lambda}$.
 \end{itemize}
 
We are now ready to prove Lemma \ref{threelem}.
We assume the contrary and let $P = \det_3$ or $P=\per_3$ such that
\begin{enumerate}[(1)]
\item $A$ is an $n\times n$ matrix,
\item $\det(A) \in \{-P,P\}$,
\item $\rk\Lambda=n-1$,
\item $\rk(X^{i,j})=1$ for all $1 \leq i,j \leq 3$.
\end{enumerate}

Note that the operations defined above all preserve (1)-(4).
Thus performing a Gauss-Jordan elimination on $\Lambda$ (and performing the operations on the whole matrix $A$) we can make $\Lambda$ standard while preserving (1)-(4).
So we can additionally assume:

\begin{enumerate}
\item[(5)]  $\Lambda$ is standard and hence    properties (I),(II),(III)  from Lemma~\ref{lem:properties} hold.
\end{enumerate}

Using (5)(III) we pick a variable that appears in $A$ in the first column. It cannot appear at position (1,1) because of (5)(I).

The operation of \emph{permuting variable names} by permuting rows and/or columns of the $3 \times 3$ variable matrix
preserves (1)-(5)  and belongs to $G_{\tperm_3}$. Doing so we can assume that $X^{1,1}$ has a nonzero entry in column 1, not in position (1,1).
Using permutation conjugation we can move this position to position (2,1).
Using first column operations we can make $X^{1,1}$ have only zeros in row 2, besides the nonzero entry at position (2,1).
Using elimination conjugation we can make $X^{1,1}$ have only zeros in column 1, besides the nonzero entry at position (2,1).
Using (4) we see that $X^{1,1}$ only has a single nonzero entry: at position (2,1).
So besides (1)-(5) we can assume:
\begin{enumerate}

\item [(6)] $X^{1,1}_{i,j}\neq 0$ iff $(i,j)=(2,1)$.
\end{enumerate}
Combining (5)(II) with (6) it follows that
\begin{equation}\tag{6b}
A_{1,2}=0.
\end{equation}

We want to deduce more facts about $A$ by setting several variables to zero.
Set all variables in $A$ to zero besides $y^{1,1}$, $y^{2,2}$, $y^{3,3}$ and 
call the resulting matrix $B$. From (2) it follows that we have
\begin{equation}\tag{2b}
\det(B)=\pm y^{1,1}y^{2,2}y^{3,3}.
\end{equation}
By (2b)
the first row of $B$ cannot be all zeros, so by (6) and the standardness granted by (5) we have that $X^{2,2}$ or $X^{3,3}$ have a nonzero entry in the first row.
If $X^{2,2}$ has a nonzero entry, we permute the 2nd and 3rd row and column in the $3 \times 3$ variable matrix.
This operation preserves (1)-(6), so we conclude that we can assume
\begin{enumerate}
\item [(7)] $X^{3,3}$ has a nonzero entry in the first row.
\end{enumerate}
Combining (4) and (5)(I) it follows that
\begin{equation}\tag{7b}
\text{The first column of  } X^{3,3} \text{ is zero.} 
\end{equation}
Using permutation conjugation we want to move the nonzero entry from (7) in $X^{3,3}$ to position $(1,n)$.
Note that according to (5)(I) and (6b) this entry is in row 1 in some column $3,\ldots,n$.
Permutation conjugation on indices $3,\ldots,n$ preserves (1)-(7).
Thus we can use permutation conjugations to assume that
\begin{enumerate}
\item [(8)] $X^{3,3}_{1,n}\neq 0$.
\end{enumerate}
Using first row operations preserves (1)-(8), for example they preserve (6) because of (5)(I).
Thus we can use first row operations to assume that
\begin{enumerate}
\item [(9)]  The only nonzero entry of $X^{3,3}$ in column $n$ is $(1,n)$. 
\end{enumerate}
Elimination conjugation (adding $\alpha$ times column $n$ to column $3 \leq k \leq n-1$ and then subtracting $\alpha$ times row $k$ from row $n$)
preserves (1)-(9). We use these operations together with (5)(I) and (6b) to assume that
\begin{enumerate}
\item [(10)]  The only nonzero entry of $X^{3,3}$ in row  $1$ is $(1,n)$.  
\end{enumerate}
Combining (4) with (9) and (10) we conclude
\begin{equation}\tag{10b}
X^{3,3}_{i,j}\neq 0 \text{ iff } (i,j)= (1,n).
\end{equation}
With (5)(II) we conclude
\begin{equation}\tag{10c}
A_{n,1}=0.
\end{equation}
Let $A'$ denote the submatrix of $A$ obtained by deleting the rows $1$ and $2$ and the columns $1$ and $n$.
By assumption $\det(A)$ has a summand $y^{1,1}y^{2,2}y^{3,3}$.
Using (6) and (10b), a double Laplace expansion implies that $\det(A')$ has a term $y^{2,2}$.
By the standardness granted by (5), the homogeneous degree 1 part of $\det(A')$ is precisely the entry at position $(n,2)$ in $A$.
It follows that
\begin{equation}\tag{10d}
X^{2,2}_{n,2} \neq 0.
\end{equation}
We claim that
\begin{equation}\tag{10e}
\text{$X^{i,j}_{n,2}=0$ for all $(i,j)\neq(2,2)$}.
\end{equation}
Assume that $X^{i,j}_{n,2}\neq 0$ for some $1 \leq i,j \leq 3$.
Set all variables in $A$ to zero but $y^{1,1}$, $y^{3,3}$, and $y^{i,j}$, and call the resulting matrix $E$.
Since $y^{1,1}$ and $y^{3,3}$ appear only once in $A$ and since $\Lambda$ is standard, the degree 3 part of $\det(E)$
contains all summands that appear in $y^{1,1}\cdot y^{3,3} \cdot q$,
where $q$ is the linear part of $\det(A(\{1,2\},\{1,n\}))$.
Indeed, $q$ equals the linear part of $A$ at position $(n,2)$.
Since $\det(A)\in\{\det_3,\per_3\}$ it follows that $q=y^{2,2}$, thus
$(i,j)=(2,2)$. This proves the claim (10e).

 We   deduce more facts about $A$ by setting several other variables to zero.
Set all variables in $A$ to zero besides $y^{1,1}$, $y^{2,3}$, $y^{3,2}$ and 
call the resulting matrix $C$. From (2) it follows that we have
\begin{equation}\tag{2c}
\det(C)=\pm y^{1,1}y^{2,3}y^{3,2}.
\end{equation}
By (2c)
the first row of $C$ cannot be all zeros, so by (5) and (6) we have that $X^{2,3}$ or $X^{3,2}$ have a nonzero entry in the first row.
If it is $X^{3,2}$ and not $X^{2,3}$, then we can apply the transposition
 from $G_{\tperm_3}$  (preserving (1)-(10) because $X^{1,1}$, $X^{2,2}$, and $X^{3,3}$ are fixed) to ensure:
\begin{enumerate}
 \item [(11)] $X^{2,3}$ has at least one nonzero entry in row 1.
\end{enumerate}
Combining (11) and (4) and (5)(I) we see that
\begin{equation}\tag{11b}
X^{2,3} \text{ is zero in the first column}.
\end{equation}

There are two cases:
\subsection*{Case 1: In row 1, $X^{2,3}$ is nonzero only in column $n$}
We will show that this case cannot appear.

>From the assumption of case 1 we conclude with (4) that
\begin{equation}\tag{11$'$}
X^{2,3} \text{ is zero everywhere but in the last column}.
\end{equation}
 We  apply Lemma~\ref{lem:thirdposition} with the monomial $y^{1,1} y^{2,3} y^{3,2}$ that appears in $\det(A)$,
so
\begin{itemize}
\item one of the three variables goes to the first column in some row $k \neq 1$,
\item one goes to the first row in some column $\ell \neq 1$, 
\item and one goes to position $(\ell,k)$.
\end{itemize}
Since by (6) $y^{1,1}$ only appears in the first column, it must be the variable that goes to the first column.
Again, by (6) we have $k=2$.
By (11$'$) $y^{2,3}$ cannot go to the second column, in particular not to position $(\ell,k)$,
so $y^{2,3}$ goes to the first row.
By (11) and (11$'$), $y^{2,3}$ goes to position $(1,n)$.
Therefore $y^{3,2}$ goes to position $(n,2)$.
This is a contradiction to (10e).
We conclude that case 1 cannot appear.

\subsection*{Case 2: In row 1, $X^{2,3}$ is nonzero in some column which is not $n$}
Permutation conjugation on the indices $3,\ldots,n-1$ preserves (1)-(11).
By (5)(I) and (6b) and the case assumption
these permutation conjugations are sufficient to assume
\begin{enumerate}
\item [(12)] $X^{2,3}_{1,{n-1}}\neq 0$.
\end{enumerate}
Since elimination conjugations (subtracting multiples of column $n-1$ from columns $3,\ldots,n-2$
and then adding multiples of rows $3,\ldots,n-2$ to row $n-1$)
preserve (1)-(12)
we can assume that 
\begin{enumerate}
\item [(13)] In row 1, the only positions of nonzero entries in $X^{2,3}$ are $(1,n-1)$ and possibly additionally $(1,n)$.
\end{enumerate}
Using (4) we conclude
\begin{enumerate}
\item [(13b)] $X^{2,3}$ vanishes in columns $1,\ldots,n-2$.
\end{enumerate}
Using elimination conjugation (subtract a multiple of column $n-1$ from column $n$ and add a multiple of row $n$ to row $n-1$), 
which preserves (1)-(13),
we can assume that
\begin{enumerate}
\item [(14)] $X^{2,3}_{1,n}=0$.
\end{enumerate}
 Then (4),   (12), (13b),  and (14) imply   
 \begin{enumerate}
\item [(14b)] $X^{2,3}$ is nonzero only in column $n-1$.
\end{enumerate}
 Lemma~\ref{lem:thirdposition} applied to  the monomial $y^{1,1}y^{2,3}y^{3,2}$ gives \begin{itemize}
\item one of the three variables goes to the first column in some row $k \neq 1$,
\item one goes to the first row in some column $\ell \neq 1$, 
\item and one goes to position $(\ell,k)$.
\end{itemize}
Since by (6) $y^{1,1}$ only appears in the first column, $y^{1,1}$ must be the variable that goes to the first column.
 Again,   $k=2$  by (6). 
By (13b) $y^{2,3}$ cannot go to the second column, so $y^{2,3}$ goes to the first row.
By (14b) $y^{2,3}$ appears at position $(1,n-1)$.
Therefore $y^{3,2}$ goes to position $(n-1,2)$.
Summarizing:
\begin{enumerate}
\item [(14c)] $X^{2,3}_{1,n-1}\neq 0$ and $X^{3,2}_{n-1,2}\neq 0$.
\end{enumerate}
Using (14b) and (5)(II) we conclude
\begin{enumerate}
\item [(14d)]  $A_{n-1,1} = 0$.
\end{enumerate}

 Lemma~\ref{lem:thirdposition} applied to  the monomial $y^{1,2}y^{2,1}y^{3,3}$ gives 
\begin{itemize}
\item one of the three variables goes to the first column in some row $k \neq 1$,
\item one goes to the first row in some column $\ell \neq 1$, 
\item and one goes to position $(\ell,k)$.
\end{itemize}
Since the only position for $y^{3,3}$ is fixed, $y^{3,3}$ goes to the first row to position $(1,n)$, so $\ell=n$.
Using (10c) and (14d) we see $k \leq n-2$. Moreover (5)(I) says $k \neq 1$ and (10e) says $k\neq 2$. So in total we have $3 \leq k \leq n-2$.
Using permutation conjugation we can assume $k=3$, so that the only cases left to consider are:

\subsection*{Case 2.1: $X^{1,2}_{3,1}\neq 0$ and $X^{2,1}_{n,3}\neq 0$}
Using elimination conjugation we can get rid of any occurrences of $y^{1,2}$ at positions $(k,1)$ for $4 \leq k \leq n-2$.
So with (6b) and (5)(II) it follows
\begin{enumerate}
\item [(15)] $A_{1,3} = 0.$
\end{enumerate}
  Lemma~\ref{lem:thirdposition} applied to  the monomial $y^{1,2}y^{2,3}y^{3,1}$ gives 
\begin{itemize}
\item one of the three variables goes to the first column in some row $k \neq 1$,
\item one goes to the first row in some column $\ell \neq 1$, 
\item and one goes to position $(\ell,k)$.
\end{itemize}
By (14b), $y^{2,3}$ only appears in column $n-1$, so $y^{2,3}$ does not go to the first column.
We make a small case distinction:
First assume that $y^{2,3}$ does not go in the first row.
Then $y^{2,3}$ goes to position $(\ell,k)$ with $k=n-1$.
But $k=n-1$ is impossible because $A_{n-1,1}=0$ by (14d).

On the other hand, if we assume that $y^{2,3}$ goes in the first row, then $\ell=n-1$.
By (4) and the case assumption~2.1, since $\ell=n-1$, $y^{1,2}$ cannot go to $(\ell,k)$, so it must go in the first column.
Therefore $y^{3,1}$ goes to position $(n-1,k)$.
Since $A_{n-1,1}=0$ by (14d) and $A_{n,1}=0$ by (10c) and $X^{2,2}_{n,2}\neq 0$ by (10d)
the variables $y^{3,1}$ and $y^{2,2}$ cannot appear in column 1 because of (4).
Thus using Lemma~\ref{lem:thirdposition} for the monomial $y^{2,2}y^{1,3}y^{3,1}$
we see that $y^{1,3}$ must appear in the first column.
But for the sake of contradiction we now use Lemma~\ref{lem:thirdposition} for the monomial $y^{1,3}y^{3,2}y^{2,1}$ as follows:
We have $A_{1,1}=A_{1,2}=A_{1,3}=0$ by (5)(I) and (6b) and (15).
The variable $y^{1,3}$ appears in column~1,
the variable $y^{3,2}$ appears in column~2 by (14c),
and the variable $y^{2,1}$ appears in column~3 (case assumption 2.1).
Thus by (4) none of these three variables appears in row 1, which is a contradiction to Lemma~\ref{lem:thirdposition}.
Therefore case 2.1 cannot appear.

\subsection*{Case 2.2: $X^{2,1}_{3,1}\neq 0$ and $X^{1,2}_{n,3}\neq 0$}
Using elimination conjugation  we   get rid of any occurrences of $y^{2,1}$ at positions $(k,1)$ for $k \neq 3$:
\begin{enumerate}
\item [(15)] In the first column $y^{2,1}$ appears only at position $(3,1)$.
\end{enumerate}
So with (6b) and (5)(II) it follows
\begin{enumerate}
\item [(15b)] $A_{1,3} = 0.$
\end{enumerate}
  Lemma~\ref{lem:thirdposition}  applied to  the monomial $y^{1,2}y^{2,3}y^{3,1}$ gives 
\begin{itemize}
\item one of the three variables goes to the first column in some row $k \neq 1$,
\item one goes to the first row in some column $\ell \neq 1$, 
\item and one goes to position $(\ell,k)$.
\end{itemize}
Since $X^{1,2}_{n,3}\neq 0$ and
since (4) combined with (10c) and (15b) implies that
$A_{n,1}=A_{1,3}=0$, it follows that $y^{1,2}$ is the variable that appears at position $(\ell,k)$. 
Since by (14b) $y^{2,3}$ only appears in column $n-1$,
$y^{2,3}$ must be the variable that appears in the first row at position $(1,n-1)$. Thus $\ell=n-1$.
Moreover, the third variable $y^{3,1}$ must appear in the first column.

In the first column $y^{3,1}$ cannot appear in rows $1$, $n-1$, or $n$ by (5)(I), (10c), (14d).
We want to use elimination conjugation on rows/columns $2,\ldots,n-2$ to ensure that $y^{3,1}$ appears only once in the first column.
But not every operation preserves (1)-(15).

\subsection*{Case 2.2.1: In column 1 $y^{3,1}$ appears in a row $4 \leq j \leq n-2$}
If $y^{3,1}$ appears in column 1 in a row $4 \leq j \leq n-2$, then elimination conjugation can be used to
ensure that
\begin{enumerate}
\item [(16)] In column 1 $y^{3,1}$ appears only in row $j$.
\end{enumerate}
Thus $k=j$. Thus $y^{1,2}$ occurs at position $(\ell,k)=(n-1,j)$.
With (4) and with case assumption 2.2 we see that
\begin{enumerate}
\item [(16b)] $X^{1,2}_{n,j}\neq 0$.
\end{enumerate}
Since $y^{3,3}$ occurs only at position $(1,n)$ and $\Lambda$ is zero in the first row,
$y^{3,1}y^{1,2}y^{3,3}$ occurs in $\det(A)$ iff $y^{3,1}y^{1,2}$ occurs in $\det(A(\{1\},\{n\}))$.
Also $\Lambda$ is zero in the first column and by (4) there can be no occurrence of $y^{1,2}$ in the first column,
so an occurrence of $y^{3,1}y^{1,2}y^{3,3}$ in $\det(A)$ must involve $y^{3,1}$ in the first column, which only occurs at position $(j,1)$.
So $y^{3,1}y^{1,2}y^{3,3}$ occurs in $\det(A)$ iff $y^{1,2}$ occurs in $\det(A(\{1,j\},\{1,n\}))$.
But by the special form of $\Lambda$ it follows that the degree 1 term of $\det(A(\{1,j\},\{1,n\}))$
is a nonzero scalar multiple of $X_{n,j}$.
With (16b), it follows that $y^{3,1}y^{1,2}y^{3,3}$ appears in $\det(A)$.
This is a contradiction to (2).
Therefore we ruled out case 2.2.1.

\subsection*{Case 2.2.2: In column 1 $y^{3,1}$ only appears in rows 2 and/or 3}
Clearly $2 \leq k \leq 3$.

If $k=2$, then $X^{1,2}_{n-1,2} \neq 0$.
By (4) and case assumption 2.2 it follows $X^{1,2}_{n,2} \neq 0$,
in contradiction to (10e).

So from now on assume that $k=3$.
In particular $X^{3,1}_{3,1}\neq 0$.
We adjust the argument from case 2.2.1 as follows.

Since $y^{3,3}$ occurs only at position $(1,n)$ and $\Lambda$ is zero in the first row,
$y^{3,1}y^{1,2}y^{3,3}$ occurs in $\det(A)$ iff $y^{3,1}y^{1,2}$ occurs in $\det(A(\{1\},\{n\}))$.
Also $\Lambda$ is zero in the first column and by (4) there can be no occurrence of $y^{1,2}$ in the first column,
so an occurrence of $y^{3,1}y^{1,2}y^{3,3}$ in $\det(A)$ must involve $y^{3,1}$ in the first column.
Since $k=3$ this occurs at position $(3,1)$, but by case assumption 2.2.2 it might also occur at position $(2,1)$.
But even though $y^{3,1}$ can appear at position $(2,1)$, this $y^{3,1}$ cannot contribute
to the coefficient of $y^{3,1}y^{1,2}y^{3,3}$ in $\det(A)$, because the special form of $\Delta$ together with (10e)
ensures that $\det(A(\{1,2\},\{1,n\}))$ has no term $y^{1,2}$.
So $y^{3,1}y^{1,2}y^{3,3}$ occurs in $\det(A)$ iff $y^{1,2}$ occurs in $\det(A(\{1,3\},\{1,n\}))$.
But by the special form of $\Lambda$ it follows that the degree 1 term of $\det(A(\{1,3\},\{1,n\}))$
is a nonzero scalar multiple of $X_{n,3}$.
Using the case assumption 2.2, it follows that $y^{3,1}y^{1,2}y^{3,3}$ appears in $\det(A)$.
This is a contradiction to (2).
Therefore we ruled out case 2.2.2.

% \bibliographystyle{amsplain}
%  
% \bibliography{Lmatrix}

\def\cdprime{$''$} \def\cprime{$'$} \def\cprime{$'$} \def\cprime{$'$}
  \def\Dbar{\leavevmode\lower.6ex\hbox to 0pt{\hskip-.23ex \accent"16\hss}D}
  \def\cprime{$'$} \def\cprime{$'$} \def\cdprime{$''$} \def\cprime{$'$}
  \def\cprime{$'$} \def\Dbar{\leavevmode\lower.6ex\hbox to 0pt{\hskip-.23ex
  \accent"16\hss}D} \def\cprime{$'$} \def\cprime{$'$} \def\cprime{$'$}
  \def\cprime{$'$} \def\Dbar{\leavevmode\lower.6ex\hbox to 0pt{\hskip-.23ex
  \accent"16\hss}D} \def\cprime{$'$} \def\cprime{$'$}
\providecommand{\bysame}{\leavevmode\hbox to3em{\hrulefill}\thinspace}
\providecommand{\MR}{\relax\ifhmode\unskip\space\fi MR }
% \MRhref is called by the amsart/book/proc definition of \MR.
\providecommand{\MRhref}[2]{%
  \href{http://www.ams.org/mathscinet-getitem?mr=#1}{#2}
}
\providecommand{\href}[2]{#2}

\end{document}